\newcommand{\Set}{\mathsf{Set}}
\newcommand{\Setc}{\mathsf{cSet}}
\newcommand{\Rel}{\mathsf{Rel}}
\newcommand{\DCPO}{\mathsf{DCPO}_\bot}
\newcommand{\PreOrd}{\mathsf{PreOrd}}
\newcommand{\alg}[1]{\mathsf{alg}(#1)}
\newcommand{\coalg}[1]{\mathsf{coalg}(#1)}
\newcommand{\cnt}{c}
\newcommand{\coalgc}[1]{\mathsf{coalg}_\cnt(#1)}
\newcommand{\kacc}[1]{[\C,\C]^\kappa}
\newcommand{\C}{\mathcal{C}}
\newcommand{\M}{\mathcal{M}}
\newcommand{\emalg}[1]{\mathsf{Alg}(#1)}
\newcommand{\emcoalg}[1]{\mathsf{CoAlg}(#1)}
\newcommand{\pow}{\mathcal{P}}
\newcommand{\powf}{\pow_f}
\newcommand{\powc}{\pow_c}
\newcommand{\Id}{\mathsf{Id}}
\newcommand{\real}{\mathbb{R}}
\newcommand{\lift}[1]{\overline{#1}}
\newcommand{\liftco}[1]{\overline{#1}}
\newcommand{\gr}{\mathsf{Graph}}
\newcommand{\id}{\mathsf{id}}
\newcommand{\op}{\mathsf{op}}
\newtheorem{theorem}{Theorem}
\newtheorem{lemma}{Lemma}
\newtheorem{corollary}{Corollary}
\theoremstyle{definition}
\newtheorem{definition}{Definition}
\newtheorem{example}{Example}
\title{Distributive Laws for Monotone Specifications\thanks{The research leading to these results has received funding from the
European Research Council under the European Union's Seventh Framework
Programme (FP7/2007-2013) / ERC grant agreement nr.~320571, and the Netherlands Organisation for Scientific Research (NWO), CoRE project, dossier number: 612.063.920. Part of this research was carried out during a visit of the author to the
University of Warsaw, supported by the Warsaw Center of Mathematics and Computer Science (WCMCS).}}
\author{Jurriaan Rot
\institute{Radboud University, Nijmegen}\\
\email{jrot@cs.ru.nl}
}
\begin{document}
\maketitle

\begin{abstract}
Turi and Plotkin introduced an elegant approach to structural operational semantics based on universal coalgebra,	
parametric in the type of syntax and the type of behaviour.
Their framework includes abstract GSOS, a categorical generalisation of the classical GSOS rule format, as well as its 
categorical dual, coGSOS. Both formats are well behaved, in the sense that each specification has a unique model on which 
behavioural equivalence is a congruence. 
Unfortunately, the combination of the two formats does not feature these desirable properties. 
We show that \emph{monotone} specifications---that disallow negative premises---do induce a canonical distributive law 
of a monad over a comonad, and therefore a unique, compositional interpretation. 
\end{abstract}

\section{Introduction}\label{intro}

Structural operational semantics (SOS) is an expressive and popular framework for 
defining the operational semantics of programming languages and calculi.
There is a wide variety of specification formats that syntactically restrict
the full power of SOS, but guarantee certain desirable properties to hold~\cite{AFV}. 
A famous example is the so-called GSOS format~\cite{BloomIM95}. 
Any GSOS specification induces a unique interpretation which is compositional with 
respect to (strong) bisimilarity.

In their seminal paper~\cite{TP97}, Turi and Plotkin introduced an elegant mathematical approach to structural operational semantics,
where the type of syntax is modeled by an endofunctor $\Sigma$ and the type of behaviour is modeled by an endofunctor $B$. 
Operational semantics is then given by a \emph{distributive law} of $\Sigma$ over $B$.
In this context, models are \emph{bialgebras}, which consist of a $\Sigma$-algebra and a $B$-coalgebra 
over a common carrier. One major advantage of this framework over traditional approaches is that it is parametric 
in the type of behaviour.
Indeed, by instantiating the theory to a particular functor $B$, one can obtain well behaved specification formats for 
probabilistic and stochastic systems, weighted transition systems, streams, and many more~\cite{Klin09,Klin11,Bartels04}.

Turi and Plotkin introduced several kinds of natural transformations involving $\Sigma$ and $B$,
the most basic one being of the form $\Sigma B \Rightarrow B \Sigma$. If $B$ is a functor
representing labelled transition systems, then a typical rule that can be represented in this format 
is the following:
\begin{equation}\label{eq:simple-ex}
\frac{x \xrightarrow{a} x' \qquad y \xrightarrow{a} y'}{x \otimes y \xrightarrow{a} x' \otimes y'}
\end{equation}
This rule should be read as follows: if $x$ can make an $a$-transition to $x'$, and $y$ an $a$-transition to $y'$,
then $x \otimes y$ can make an $a$-transition to $x' \otimes y'$. 
Any specification of the above kind induces a unique \emph{supported model}, 
which is a $B$-coalgebra over the initial algebra of $\Sigma$.
If $\Sigma$ represents a signature and $B$ represents labelled transition systems, then this model  
is a transition system of which the state space is the set of closed terms in the signature, 
and, informally, a term makes a transition to another term if and only if there is a rule in the 
specification justifying this transition.

A more interesting kind is an \emph{abstract GSOS specification}, which is 
a natural transformation of the form $\Sigma (B \times \Id) \Rightarrow B\Sigma^*$, 
where $\Sigma^*$ is the \emph{free monad} for $\Sigma$ (assuming it exists). 
If $B$ is the functor that models (image-finite) transition systems, and $\Sigma$ is a functor representing
a signature, then such specifications correspond to 
classical GSOS specifications~\cite{TP97,Bartels04}.
As opposed to the basic format, GSOS rules allow complex terms in conclusions, as in the following rule specifying a constant $c$:
\begin{equation}\label{eq:gsos-ex}
\frac{}{c \xrightarrow{a} \sigma(c)}
\end{equation}
where $\sigma$ is some other operator in the signature (represented by $\Sigma$), which can itself be defined by some GSOS rules. 
The term $\sigma(c)$ is constructed from a constant and a unary operator from the signature,
as opposed to the conclusion $x' \otimes y'$ of the rule in~\eqref{eq:simple-ex}, which consists of a single operator and variables.
Indeed, the free monad $\Sigma^*$ occurring in an abstract GSOS specification is precisely what allows a complex term such as $\sigma(c)$
in the conclusion.
 
Dually, one can consider \emph{coGSOS specifications}, which
are of the form $\Sigma B^\infty \Rightarrow B(\Sigma + \Id)$, where $B^\infty$ is the cofree comonad for $B$ (assuming it exists).
In the case of image-finite labelled transition systems, this format corresponds to the \emph{safe ntree format}~\cite{TP97}.
A typical coGSOS rule is the following:
\begin{equation}\label{eq:cogsos-ex}
\frac{x \xrightarrow{a} x' \qquad x' \not \xrightarrow{a} }
{\sigma(x) \xrightarrow{a} x'}
\end{equation}
This rule uses two steps of lookahead in the premise; this is supported by the cofree comonad $B^\infty$ in the natural transformation.
The symbol $x' \not \xrightarrow{a}$ represents a \emph{negative premise}, which is satisfied whenever $x'$ does not make an $a$-transition.

Both GSOS and coGSOS specifications induce \emph{distributive laws}, and as a consequence they induce unique 
supported models on which behavioural equivalence is a congruence. 
The two formats are incomparable in terms of expressive power: GSOS specifications allow rules that involve complex 
terms in the conclusion, whereas coGSOS allows arbitrary lookahead in the arguments. 
It is straightforward to \emph{combine} GSOS and coGSOS as a natural transformation of the 
form $\Sigma B^\infty \Rightarrow B\Sigma^*$, called a \emph{biGSOS specification}, generalising both formats. 
However, such specifications are, in some sense, too expressive: they do not induce unique supported models,
as already observed in~\cite{TP97}. For example, 
the rules $\eqref{eq:gsos-ex}$ and $\eqref{eq:cogsos-ex}$ above (which are GSOS and coGSOS respectively) can be combined
into a single biGSOS specification. Suppose this combined specification has a model. By the axiom for $c$, there is a transition
$c \xrightarrow{a} \sigma(c)$ in this model. However, is there a transition $\sigma(c) \xrightarrow{a} \sigma(c)$? If 
there is not, then by the rule for $\sigma$, there is; but if there is such a transition, 
then it is not derivable, so it is not in the model! Thus, a supported model does not exist.
In fact, it was recently shown that, for biGSOS,
it is undecidable whether a (unique) supported model exists~\cite{KlinN17}.

The use of negative premises in the above example (and in~\cite{KlinN17}) is crucial. In the present paper, 
we introduce the notion of \emph{monotonicity} 
of biGSOS specifications, generalising monotone abstract GSOS~\cite{MS10}. In the case that $B$ is a functor representing labelled transition systems, this corresponds to 
the absence of negative premises, but the format does allow lookahead in premises as well as complex terms in conclusions. 
Monotonicity requires an \emph{order} on the functor $B$---technically, our definition of monotonicity is based on the \emph{similarity} order~\cite{HJ04} induced on the final coalgebra. 

We show that if there is a pointed DCPO structure on the functor $B$, then any monotone biGSOS specification yields a 
\emph{least} model as its operational interpretation. Indeed, monotone specifications do not necessarily have
a unique model, but it is the least model which makes sense operationally, since this corresponds to the natural notion that
every transition has a \emph{finite} proof. 
Our main result is that if the functor $B$ has a DCPO structure, then every monotone specification yields a canonical \emph{distributive law} of the free monad
for $\Sigma$ over the cofree comonad for $B$. Its unique model coincides with the least supported model of the specification. 
As a consequence, behavioural equivalence on this model is a congruence. 

However, the conditions of these results are a bit too restrictive: they rule out labelled transition systems,
the main example. The problem is that the functors typically used to model transition systems either
fail to have a cofree comonad (the powerset functor) or to have a DCPO structure (the finite or countable powerset functor).
In the final section, we mitigate this problem using the theory of (countably) presentable categories and accessible functors.
This allows us to relax the requirement of DCPO structure only to countable sets, given
that the functor $B$ is countably accessible (this is weaker than being finitary, a standard
condition in the theory of coalgebras) and the syntax consists only of countably many operations each with
finite arity. In particular, this applies to labelled transition systems (with countable branching) and
certain kinds of weighted transition systems.

\paragraph{Related work}
The idea of studying distributive laws of monads over comonads that are not induced by GSOS or coGSOS specifications
has been around for some time (e.g.,~\cite{Bartels04}), but, according to a recent overview paper~\cite{Klin11}, general bialgebraic formats 
(other than GSOS or coGSOS) which induce such distributive laws have not been proposed so far. In fact, it is shown
by Klin and Nachy\l{}a that the general problem of extending biGSOS specifications
to distributive laws is undecidable~\cite{KlinN14,KlinN17}. The current paper shows
that one does obtain distributive laws from biGSOS specifications when monotonicity is assumed (negative
premises are disallowed).
A fundamentally different approach to positive formats with lookahead,
not based on the framework of bialgebraic semantics but on labelled transition systems modeled very generally in a topos, was introduced in~\cite{Staton08}. It is deeply rooted in labelled transition systems, and hence seems incomparable to our approach
based on generic coalgebras for ordered functors. An abstract study of distributive laws 
of monads over comonads and possible morphisms between them is in~\cite{PowerW02}, but
it does not include characterisations in terms of simpler natural transformations.

\paragraph{Structure of the paper}
Section~\ref{sec:bialg} contains the necessary preliminaries on bialgebras and distributive laws. 
In Section~\ref{sec:similarity} we recall the notion of similarity on coalgebras, which
we use in Section~\ref{sec:mon-spec} to define monotone specifications and prove the existence of least supported models. 
Section~\ref{sec:dl} contains our main result: canonical distributive laws for monotone biGSOS specifications. 
In Section~\ref{sec:cpo}, this is extended to countably accessible functors.

\paragraph{Notation}
We use the categories $\Set$ of sets and functions,
$\PreOrd$ of preorders and monotone functions, 
and $\DCPO$ of pointed DCPOs and continuous maps. 
By $\pow$ we denote the (contravariant) power set functor; 
$\powc$ is the countable power set functor and $\powf$ the finite power set functor. 
Given a relation $R \subseteq X \times Y$, we write
$\pi_1 \colon R \rightarrow X$ and $\pi_2 \colon R \rightarrow Y$
for its left and right projection, respectively. 
Given another relation $S \subseteq Y \times Z$ we denote the composition
of $R$ and $S$ by $R \circ S$.
We let $R^\op = \{(y,x) \mid (x,y) \in R\}$.
For a set $X$, we let $\Delta_X = \{(x,x) \mid x \in X\}$. 
The graph of a function $f\colon X \rightarrow Y$ is $\gr(f) = \{(x,f(x)) \mid x \in X\}$.
The image of a set $S \subseteq X$ under $f$ is denoted simply by $f(S) = \{f(x) \mid x \in S\}$, and 
the inverse image of $V \subseteq Y$ by $f^{-1}(V) = \{x \mid f(x) \in V\}$.
The pairing of two functions $f,g$ with a common domain is denoted by $\langle f, g \rangle$
and the copairing (for functions $f,g$ with a common codomain) by $[f,g]$.
The set of functions from $X$ to $Y$ is denoted by $Y^X$.
Any relation $R \subseteq Y \times Y$ can be lifted pointwise
to a relation on $Y^X$; in the sequel we will simply denote such a pointwise
extension by the relation itself, i.e., for functions $f, g \colon X \rightarrow Y$
we have $f \, R \, g$ iff $f(x) \, R \, g(x)$ for all $x \in X$,
or, equivalently, $(f \times g)(\Delta_X) \subseteq R$.

\paragraph{Acknowledgements} The author is grateful to Henning Basold, Marcello Bonsangue, Bartek Klin and Beata Nachy\l{}a for inspiring discussions and suggestions.

\section{(Co)algebras, (co)monads and distributive laws}\label{sec:bialg}

We recall the necessary definitions on algebras, coalgebras, and distributive laws of monads over comonads.
For an introduction to coalgebra see~\cite{Rutten00,BR12}. All of the definitions and results below 
and most of the examples can be found in~\cite{Klin11}, which provides an overview of bialgebraic semantics.
Unless mentioned otherwise, all functors considered are endofunctors on $\Set$.

\subsection{Algebras and monads}

An \emph{algebra} for a functor $\Sigma \colon \Set \rightarrow \Set$ consists of a set $X$ and a function $f \colon \Sigma X \rightarrow X$.
An \emph{(algebra) homomorphism} from $f \colon \Sigma X \rightarrow X$ to $g \colon \Sigma Y \rightarrow Y$ is a function $h \colon X \rightarrow Y$
such that $h \circ f = g \circ \Sigma h$. The category of algebras and their homomorphisms is denoted by $\alg{\Sigma}$.

A \emph{monad} is a triple $\mathcal{T} = (T, \eta, \mu)$ where $T \colon \Set \rightarrow \Set$ is a functor and  
$\eta \colon \Id \Rightarrow T$ and $\mu \colon TT \Rightarrow T$ are natural transformations such
that $\mu \circ T \eta = \id = \mu \circ \eta T$ and $\mu \circ \mu T = \mu \circ T \mu$.
An \emph{(Eilenberg-Moore, or EM)-algebra} for $\mathcal{T}$ is a $T$-algebra 
$f \colon TX \rightarrow X$ such that $f \circ \eta_X = \id$ and $f \circ \mu_X = f \circ Tf$.
We denote the category of EM-algebras by $\emalg{\mathcal{T}}$.

We assume that a \emph{free monad} $(\Sigma^*, \eta, \mu)$ for $\Sigma$ exists.
This means that there is a natural transformation $\iota \colon \Sigma \Sigma^* \Rightarrow \Sigma^*$ such
that $\iota_X$ is a \emph{free algebra} on the set $X$ of generators, that is, the copairing of
$$
\xymatrix{
  \Sigma \Sigma^* X \ar[r]^-{\iota_X} & \Sigma^* X & X \ar[l]_-{\eta_X} 
}
$$
is an initial algebra for $\Sigma + X$.
By Lambek's lemma, $[\iota_X, \eta_X]$ is an isomorphism.
Any algebra $f \colon \Sigma X \rightarrow X$ induces a $\Sigma + X$-algebra $[f, \id]$,
and therefore by initiality a $\Sigma^*$-algebra $f^* \colon \Sigma^* X \rightarrow X$,
which we call the \emph{inductive extension} of $f$. In particular, the inductive extension of
$\iota_X$ is $\mu_X$. This construction preserves homomorphisms: if $h$ is a homomorphism from $f$ to $g$,
then it is also a homomorphism from $f^*$ to $g^*$.

\begin{example}
	An algebraic signature (a countable collection of operator names with finite arities) induces a 
	polynomial functor $\Sigma$, meaning here
	a countable coproduct of finite products. The free monad $\Sigma^*$ constructs terms, that is, 
	$\Sigma^*X$ is given by the grammar $t ::= \sigma(t_1, \ldots, t_n) \mid x$ where $x$ ranges over $X$ and
	$\sigma$ ranges over the operator names (and $n$ is the arity of $\sigma$), so in particular $\Sigma^*\emptyset$
	is the set of closed terms over $\Sigma$.
\end{example}

\subsection{Coalgebras and comonads}

A \emph{coalgebra} for the functor $B$ consists of a set $X$ and a function $f \colon X \rightarrow BX$.
A \emph{(coalgebra) homomorphism} from $f \colon X \rightarrow BX$ to $g \colon Y \rightarrow BY$ is a function
$h \colon X \rightarrow Y$ such that $Bh \circ f = g \circ h$. The category of $B$-coalgebras
and their homomorphisms is denoted by $\coalg{B}$. 

A \emph{comonad} is a triple $\mathcal{D} = (D, \epsilon, \delta)$ consisting of a functor $D \colon \Set \rightarrow \Set$
and natural transformations $\epsilon \colon D \Rightarrow \Id$ and $\delta \colon D \Rightarrow DD$ 
satisfying axioms dual to the monad axioms.
The category of Eilenberg-Moore coalgebras for $\mathcal{D}$, defined dually to EM-algebras, is denoted by $\emcoalg{\mathcal{D}}$.

We assume that a \emph{cofree comonad} $(B^\infty, \delta, \epsilon)$ for $B$ exists.
This means that there is a natural transformation $\theta \colon B^\infty \Rightarrow BB^\infty$ such
that $\theta_X$ is a \emph{cofree coalgebra} on the set $X$, that is, the pairing of 
$$
\xymatrix{
  B B^\infty X & B^\infty X \ar[r]^{\epsilon_X} \ar[l]_-{\theta_X} & X
}
$$
is a final coalgebra for $B \times X$.
Any coalgebra $f \colon X \rightarrow BX$ induces a $B \times X$-coalgebra $ \langle f, \id \rangle$,
and therefore by finality a $B^\infty$-coalgebra $f^\infty \colon X \rightarrow B^\infty X$,
which we call the \emph{coinductive extension} of $f$. In particular, the coinductive extension of
$\theta_X$ is $\delta_X$. This construction preserves homomorphisms: if $h$ is a homomorphism from $f$ to $g$,
then it is also a homomorphism from $f^\infty$ to $g^\infty$.

\begin{example}\label{ex:comonad:streams}
	Consider the $\Set$ functor $BX = A \times X$ for a fixed set $A$. Coalgebras for $B$ are called \emph{stream systems}. 
	There exists a final $B$-coalgebra, whose carrier can be presented as the set 
	$A^\omega$ of all streams over $A$, i.e., $A^\omega = \{\sigma \mid \sigma \colon \omega \rightarrow A \}$
	where $\omega$ is the set of natural numbers. For a set $X$, $B^\infty X = (A \times X)^\omega$. 
	Given $f \colon X \rightarrow A \times X$, its coinductive extension $f^\infty \colon X \rightarrow B^\infty X$
	maps a state $x \in X$ to its infinite unfolding.
	The final coalgebra of $GX = A \times X + 1$ consists of finite and infinite streams over $A$, that is, 
	elements of $A^* \cup A^\omega$. For a set $X$, $G^\infty X = (A \times X)^\omega \cup (A \times X)^* \times X$. 
\end{example}
\begin{example}
	Labelled transition systems are coalgebras for the functor $(\pow -)^A$, where $A$ is a fixed set of labels.
	Image-finite transition systems are coalgebras for the functor $(\powf -)^A$, and 
	coalgebras for $(\powc -)^A$ are transition systems which have, for every action $a \in A$ and every state $x$,
	a countable set of outgoing $a$-transitions from $x$. A final coalgebra for $(\pow -)^A$ does not exist (so there is no cofree
	comonad for it). However, both $(\powf -)^A$ and $(\powc -)^A$
	have a final coalgebra, consisting of possibly infinite rooted trees, edge-labelled in $A$, modulo strong bisimilarity,
	where for each label, the set of children is finite respectively countable. 
	The cofree comonad of $(\powf -)^A$ respectively $(\powc -)^A$, applied to a set $X$,
	consist of all trees as above, node-labelled in $X$. 
\end{example}
	
	\begin{example} \label{ex:comonad:wts}
	A \emph{complete monoid} is a (necessarily commutative) monoid $M$ together with an infinitary sum
	operation consistent with the finite sum~\cite{droste2009semirings}.
	Define the functor $\M \colon \Set \rightarrow \Set$ by $\M(X) = \{\varphi \mid \varphi \colon X \rightarrow M\}$
	and, for $f \colon X \rightarrow Y$, $\M(h)(\varphi) = \lambda y. \sum_{x \in f^{-1}(y)} \varphi(x)$. 
	A \emph{weighted transition system} over a set of labels $A$ is a coalgebra $f \colon X \rightarrow (\M X)^A$. 
	Similar to the case of labelled transition systems, we obtain weighted
	transition systems whose branching is countable for each label as coalgebras for the
	functor $(\M_c -)^A$, where
	$\M_c$ is defined by $\M_c(X) = \{\varphi \colon X \rightarrow M \mid \varphi(x) \neq 0 \text{ for countably many }x \in X\}$. We note that this only requires a countable sum on $M$ to be well-defined and,
	by further restricting to finite support, weighted transition systems are defined for any 
	commutative monoid (see, e.g.,~\cite{Klin09}). 
	Labelled transition systems are retrieved by taking the monoid with two elements
	and logical disjunction as sum. Another example arises by taking the monoid $M = \real^+ \cup \{\infty\}$ of 
	non-negative reals extended with a top element $\infty$, with the supremum operation. 
\end{example}


\subsection{GSOS, coGSOS and distributive laws}

Given a signature, a \emph{GSOS} rule~\cite{BloomIM95} $\sigma$
of arity $n$ is of the form
  \begin{equation}\label{eq:gsos}
    \frac{\{x_{i_j} \stackrel{a_j}{\rightarrow} y_j\}_{j = 1..m} \qquad \{x_{i_k} \stackrel{b_k}{\not \rightarrow}\}_{k = 1..l}}
    {\sigma(x_1, \ldots, x_n) \stackrel{c}{\rightarrow} t}
  \end{equation}
where $m$ and $l$ are the number of positive and negative premises respectively;
$a_1, \ldots, a_m, b_1, \ldots, b_l, c \in A$ are labels; $x_1, \ldots, x_n$,
$y_1, \ldots, y_m$ are pairwise distinct variables, and $t$ is a term over these variables.
An \emph{abstract GSOS specification} is a natural transformation of the form 
$$
  \Sigma (B \times \Id) \Rightarrow B\Sigma^* \, .
$$
As first observed in~\cite{TP97}, specifications in the GSOS format are generalised by abstract GSOS specifications, 
where $\Sigma$ models the signature and $BX = (\powf X)^A$.

A \emph{safe ntree} rule (as taken from~\cite{Klin11}) for $\sigma$ is of the form 
$
  \frac{\{z_i \stackrel{a_i}{\rightarrow} y_i\}_{i \in I} \qquad \{w_j \stackrel{b_j}{\not \rightarrow}\}_{j \in J}}
  {\sigma(x_1, \ldots, x_n) \stackrel{c}{\rightarrow} t}
$
where $I$ and $J$ are countable possibly infinite sets, the $z_i$, $y_i$, $w_j$, $x_k$ are variables, and $b_j, c, a_i \in A$;
the $x_k$ and $y_i$ are all distinct and they are the only variables that occur in the rule; 
the dependency graph of premise variables (where positive premises are seen as
directed edges) is well-founded, and $t$ is either a variable or a term built of a single operator from
the signature and the variables.
A \emph{coGSOS specification} is a natural transformation of the form 
$$
  \Sigma B^\infty \Rightarrow B(\Sigma + \Id) \, .
$$
As stated in~\cite{TP97}, every safe ntree specification induces a coGSOS specification
where $\Sigma$ models the signature and $BX = (\powf X)^A$. 

A \emph{distributive law} of a monad $\mathcal{T} = (T, \eta, \mu)$ over a comonad $\mathcal{D} = (D, \epsilon, \delta)$ is a natural transformation 
$\lambda \colon TD \Rightarrow DT$ so that 
$\lambda \circ D \eta = \eta D$, $\epsilon T \circ \lambda = T\epsilon$,
$\lambda \circ \mu T = D \mu \circ \lambda T \circ T \lambda$
and $D \lambda \circ \lambda D \circ T \delta = \delta T \circ \lambda$.
A \emph{$\lambda$-bialgebra} is a triple $(X, f, g)$ 
where $X$ is a set, $f$ is an EM-algebra for $\mathcal{T}$ and $g$ is an EM-coalgebra for $\mathcal{D}$,
such that $g \circ f = Df \circ \lambda_X \circ Tg$.

Every distributive law $\lambda$ induces, by initiality, a unique coalgebra $h \colon T\emptyset \rightarrow DT\emptyset$ 
such that $(T\emptyset, \mu_\emptyset, h)$ is $\lambda$-bialgebra.
If $\mathcal{D}$ is the cofree comonad for $B$, then $h$ is the coinductive extension of a $B$-coalgebra $m \colon T\emptyset \rightarrow BT\emptyset$,
which we call the \emph{operational model} of $\lambda$. Behavioural equivalence
on the operational model is a congruence. 
This result applies in particular to abstract GSOS and coGSOS specifications, which
both extend to distributive laws of monad over comonad. 


A \emph{lifting} of a functor $T \colon \Set \rightarrow \Set$ to $\emcoalg{\mathcal{D}}$ is a functor 
$\lift{T}$ making the following commute:
$$
\xymatrix{
  \emcoalg{\mathcal{D}} \ar[d] \ar[r]^{\lift{T}} & \emcoalg{\mathcal{D}} \ar[d] \\
  \Set \ar[r]^{T} & \Set
}
$$
where the vertical arrows represent the forgetful functor, sending a coalgebra to its carrier.
Further, a monad $(\lift{T}, \lift{\eta}, \lift{\mu})$ on
$\emcoalg{\mathcal{D}}$ is a lifting of a monad $\mathcal{T} = (T,\eta,\mu)$ on $\Set$ if $\lift{T}$ is a lifting of $T$, $U\lift{\eta} = \eta U$ and $U \lift{\mu}= \mu U$.
A lifting of $\mathcal{T}$ to $\coalg{B}$ is defined similarly.

Distributive laws of $\mathcal{T}$ over $\mathcal{D}$ are in
one-to-one correspondence with liftings of $(T, \eta, \mu)$ to $\emcoalg{\mathcal{D}}$ (see~\cite{J75,TP97}).
If $\mathcal{D}$ is the cofree comonad for $B$, then
$\emcoalg{\mathcal{D}} \cong \coalg{B}$, hence a further equivalent
condition is that $\mathcal{T}$ lifts to $\coalg{B}$. 
In that case, the operational model of a distributive law can be
retrieved by applying the corresponding lifting to the unique 
coalgebra $! \colon \emptyset \rightarrow B\emptyset$.

\section{Similarity}\label{sec:similarity}

In this section, we recall the notion of \emph{simulations} of coalgebras from~\cite{HJ04}, and prove a few basic results
concerning the similarity preorder on final coalgebras.

An \emph{ordered functor} is a pair $(B, \sqsubseteq)$ of functors $B \colon \Set \rightarrow \Set$ 
and $\sqsubseteq \colon \Set \rightarrow \PreOrd$ such that
$$
\xymatrix{
		& \PreOrd \ar[d]\\
	\Set \ar[r]^{B} \ar[ur]^{\sqsubseteq} & \Set
}
$$
commutes, where the arrow from $\PreOrd$ to $\Set$ is the forgetful functor. Thus, given an ordered functor,
there is a preorder $\sqsubseteq_{BX} \subseteq BX \times BX$ for any set $X$, 
and for any map $f \colon X \rightarrow Y$, $Bf$ is monotone.

The (canonical) \emph{relation lifting} of $B$  is defined on a relation $R \subseteq X \times Y$ by 
$$
	\Rel(B)(R) = \{(b,c) \in BX \times BY \mid \exists d \in BR.\, B\pi_1(d) = b \text{ and } B\pi_2(d) = c \} \,.
$$
For a detailed account of relation lifting, see, e.g.,~\cite{Jacobs:coalg}.
Let $(B, \sqsubseteq)$ be an ordered functor. The \emph{lax relation lifting} $\Rel_{\sqsubseteq}$ is
defined as follows:
$$
	\Rel_{\sqsubseteq}(B)(R \subseteq X \times Y) = {\sqsubseteq_{BX}} \circ \Rel(B)(R) \circ {\sqsubseteq_{BY}} \, .
$$
Let $(X,f)$ and $(Y,g)$ be $B$-coalgebras. A relation $R \subseteq X \times Y$ is 
a \emph{simulation} (between $f$ and $g$) if $R \subseteq (f \times g)^{-1}(\Rel_{\sqsubseteq}(B)(R))$. 
The greatest simulation between coalgebras $f$ and $g$ is called \emph{similarity}, 
denoted by $\lesssim_f^g$, or $\lesssim_f$ if $f=g$, or simply $\lesssim$ if $f$ and $g$
are clear from the context.

Given a set $X$ and an ordered functor $(B, \sqsubseteq)$, we define the ordered functor $(B \times X, \widetilde{\sqsubseteq})$
by 
$$
  (b,x) \widetilde{\sqsubseteq}_{BX} (c,y) \quad \text{ iff } \quad b \sqsubseteq_{BX} c \text{ and }x=y \,.
$$ 
The induced notion of simulation can naturally be expressed
in terms of the original one:
\begin{lemma}\label{lm:order-id}
  Let $\lesssim$ be the similarity relation between coalgebras $\langle f, f' \rangle \colon X \rightarrow BX \times Z$ and
  $\langle g, g' \rangle \colon X \rightarrow BX \times Z$. Then for any relation $R \subseteq X \times X$, 
  we have $R \subseteq (\langle f, f'\rangle \times \langle g, g' \rangle)^{-1}(\Rel_{\widetilde{\sqsubseteq}}(B \times Z)(R))$
  iff $R \subseteq (f\times g)^{-1}(\Rel_{\sqsubseteq}(B)(R))$ and for all $(x,y) \in R$: $f'(x) = g'(x)$. 
\end{lemma}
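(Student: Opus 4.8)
The plan is to compute the lax relation lifting $\Rel_{\widetilde{\sqsubseteq}}(B \times Z)$ explicitly and to read off both conjuncts on the right-hand side from it. Concretely, I would show that $\Rel_{\widetilde{\sqsubseteq}}(B \times Z)(R)$ relates $(b,z)$ and $(c,z')$ exactly when $z = z'$ and $(b,c) \in \Rel_{\sqsubseteq}(B)(R)$; the lemma is then immediate by unfolding the preimage condition $R \subseteq (\langle f,f'\rangle \times \langle g,g'\rangle)^{-1}(-)$ pointwise. Note the hypothesis concerning $\lesssim$ itself plays no role: the statement is purely about the one-step condition for a fixed $R$.

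The first step is to compute the (non-lax) canonical relation lifting of the product functor $B \times Z$. An element of $(B \times Z)(R) = BR \times Z$ is a pair $(e,w)$ with $e \in BR$ and $w \in Z$, and the two projections $(B \times Z)(\pi_i) \colon BR \times Z \to BX \times Z$ send $(e,w)$ to $(B\pi_i(e), w)$. Hence $\big((b,z),(c,z')\big) \in \Rel(B \times Z)(R)$ iff there is $(e,w)$ with $(B\pi_1(e), w) = (b,z)$ and $(B\pi_2(e), w) = (c,z')$, i.e.\ iff $z = z'$ and $(b,c) \in \Rel(B)(R)$. (This is just the standard fact that canonical relation lifting commutes with products and collapses a constant functor to the diagonal $\Delta_Z$.)

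Next, recall that $\widetilde{\sqsubseteq}$ on $BX \times Z$ relates $(b,z)$ to $(c,z')$ precisely when $z = z'$ and $b \sqsubseteq_{BX} c$. Composing $\Rel(B \times Z)(R)$ on both sides with $\widetilde{\sqsubseteq}$ therefore forces the $Z$-component to stay fixed all the way through, while on the $B$-component one obtains exactly ${\sqsubseteq_{BX}} \circ \Rel(B)(R) \circ {\sqsubseteq_{BX}} = \Rel_{\sqsubseteq}(B)(R)$, so
\[
  \Rel_{\widetilde{\sqsubseteq}}(B \times Z)(R) = \big\{\, \big((b,z),(c,z')\big) \;\big|\; z = z' \text{ and } (b,c) \in \Rel_{\sqsubseteq}(B)(R) \,\big\} \,.
\]
Now $R \subseteq (\langle f,f'\rangle \times \langle g,g'\rangle)^{-1}(\Rel_{\widetilde{\sqsubseteq}}(B \times Z)(R))$ holds iff for every $(x,y) \in R$ we have $\big((f(x),f'(x)),(g(y),g'(y))\big) \in \Rel_{\widetilde{\sqsubseteq}}(B \times Z)(R)$, which by the identity above means $f'(x) = g'(y)$ together with $(f(x),g(y)) \in \Rel_{\sqsubseteq}(B)(R)$. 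Quantifying over all $(x,y) \in R$, the first conjunct says the second-component maps agree along $R$, and the second conjunct is exactly $R \subseteq (f \times g)^{-1}(\Rel_{\sqsubseteq}(B)(R))$, which is what we wanted.

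The only step needing real care is the first one: verifying directly from the definition of canonical relation lifting that it decomposes over $B \times Z$ and that the constant part becomes $\Delta_Z$. Everything after that is a mechanical unwinding of the definitions of $\widetilde{\sqsubseteq}$, of lax relation lifting, and of pairing and inverse image, so I would not expect any obstacle there.
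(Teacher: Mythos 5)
Your proof is correct and follows the direct definitional computation that the paper leaves implicit (it states this lemma without proof): the canonical relation lifting decomposes over the product $B \times Z$ with the constant part collapsing to $\Delta_Z$, and pre/post-composing with $\widetilde{\sqsubseteq}$ keeps the $Z$-component fixed while producing exactly $\Rel_{\sqsubseteq}(B)(R)$ on the $B$-component. Note that your computation yields the condition $f'(x) = g'(y)$ for $(x,y) \in R$, which is the intended reading (related elements agree on their $Z$-labels); the ``$f'(x) = g'(x)$'' in the printed statement is evidently a typo, so no gap on your side.
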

Given an ordered functor $(B, \sqsubseteq)$
we write 
$$
\lesssim_{B^\infty X}
$$
for the similarity order induced by $(B \times X, \widetilde{\sqsubseteq})$ on the cofree coalgebra 
$(B^\infty X, \langle \theta_X, \epsilon_X \rangle)$. 
We discuss a few examples of ordered functors and similarity---see~\cite{HJ04}
for many more. 
\begin{example}\label{ex:ordered-functors-pow}
  For the functor $L_fX = (\powf X)^A$ ordered by (pointwise) subset inclusion, 
  a simulation as defined above is a (strong) simulation in the standard sense. For
  elements $p,q \in L_f^\infty X$, we have $p \lesssim_{L_f^\infty X} q$ iff there exists a (strong)
  simulation between the underlying trees of $p$ and $q$, so that related pairs agree on labels in $X$. 
\end{example}
\begin{example}\label{ex:ordered-functors-part}
  For any $G \colon \Set \rightarrow \Set$, the functor $B = G + 1$, where $1 = \{\bot\}$, can be ordered as follows: 
  $x \leq y$ iff $x = \bot$ or $x=y$, for all $x,y \in BX$.
  If $G = A \times \Id$ then $B^\infty X$ consists of 
  finite and infinite sequences of the form $x_0 \xrightarrow{a_0} x_1 \xrightarrow{a_1} x_2 \xrightarrow{a_2} \ldots$
  with $x_i \in X$ and $a_i \in A$ for each $i$ (cf. Example~\ref{ex:comonad:streams}).
  For $\sigma,\tau \in B^\infty X$ we have $\sigma \lesssim_{B^\infty X} \tau$
  if $\sigma$ does not terminate before $\tau$ does, and $\sigma$ and $\tau$ agree on labels in $X$ and $A$ on each 
  position where $\sigma$ is defined. 
\end{example}

\begin{lemma}\label{lm:sim-pres-hom}
	Coalgebra homomorphisms $h,k$ preserve
	similarity: if $x \lesssim y$ then $h(x) \lesssim k(y)$.
\end{lemma}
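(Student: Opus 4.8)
The plan is to work directly with the definition of simulation and exploit that coalgebra homomorphisms commute with the coalgebra structure. Let $h\colon (X,f)\to(X',f')$ and $k\colon (Y,g)\to(Y',g')$ be coalgebra homomorphisms, and let $\lesssim$ denote similarity between $f'$ and $g'$. It suffices to exhibit a simulation $R$ between $f$ and $g$ with $\mathord{\lesssim}\,\subseteq\,(h\times k)^{-1}(R)$; then $x\lesssim y$ implies $(x,y)\in(h\times k)^{-1}(R)$, i.e.\ $(h(x),k(y))\in R\subseteq\,\mathord{\lesssim}$, which is the claim. The natural candidate is $R=(h\times k)(\mathord{\lesssim})=\{(h(x),k(y))\mid x\lesssim y\}$ — or, slightly more convenient, $R$ defined as the relational composite $\gr(h)^\op\circ\mathord{\lesssim}\circ\gr(k)$, which contains $(h\times k)(\mathord{\lesssim})$ and has the same inverse image under $h\times k$.

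The key step is to verify that this $R$ is a simulation between $f$ and $g$, i.e.\ $R\subseteq(f\times g)^{-1}(\Rel_{\sqsubseteq}(B)(R))$. Take $(x,y)\in R$, so there are $x',y'$ with $h(x)=x'$, $k(y)=y'$, wait — rather: $x = $ some element whose $h$-image relates via $\lesssim$ to some $k$-preimage of $y$. Unwinding, there exist $u\in X$, $v\in Y$ with $x = h(u)$? No: $R\subseteq X'\times Y'$ would be wrong. Let me instead just take $R = (h\times k)(\mathord\lesssim)\subseteq X'\times Y'$ — but we need a simulation between $f$ and $g$ on $X\times Y$, so the correct candidate is $R'=(h\times k)^{-1}(\mathord\lesssim)\subseteq X\times Y$. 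I will show $R'$ is a simulation between $f$ and $g$. Indeed for $(x,y)\in R'$ we have $h(x)\lesssim k(y)$, so $(f'(h(x)),g'(k(y)))\in\Rel_{\sqsubseteq}(B)(\mathord\lesssim)$; by the homomorphism conditions $f'\circ h=Bh\circ f$ and $g'\circ k=Bk\circ g$, this reads $(Bh(f(x)),Bk(g(y)))\in\Rel_{\sqsubseteq}(B)(\mathord\lesssim)$. Now I invoke naturality/functoriality of $\Rel_\sqsubseteq(B)$ with respect to the maps $h,k$: since $\mathord\lesssim\;\supseteq\;(h\times k)(R')$, equivalently $R'\subseteq(h\times k)^{-1}(\mathord\lesssim)$, and relation lifting (both $\Rel(B)$ and, using monotonicity of $\sqsubseteq$, the lax version $\Rel_\sqsubseteq(B)$) commutes with inverse images along functions, we get $(f(x),g(y))\in\Rel_{\sqsubseteq}(B)(R')$. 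Hence $R'\subseteq(f\times g)^{-1}(\Rel_\sqsubseteq(B)(R'))$, so $R'$ is a simulation, and therefore $R'\subseteq\mathord\lesssim_f^g$. This gives exactly: $h(x)\lesssim k(y)$ implies $x\lesssim y$ — the converse of what we want.

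So the argument must go the other direction: I should push a simulation \emph{forward} along $h,k$ rather than pull it back. Take $R=\mathord\lesssim_f^g$, the greatest simulation between $f$ and $g$, and let $S=(h\times k)(R)\subseteq X'\times Y'$; equivalently $S=\gr(h)^\op\circ R\circ\gr(k)$. I claim $S$ is a simulation between $f'$ and $g'$, whence $S\subseteq\mathord\lesssim_{f'}^{g'}$, i.e.\ $x\lesssim y\Rightarrow h(x)\lesssim k(y)$, as required. For $(x',y')\in S$ pick $(x,y)\in R$ with $h(x)=x'$, $k(y)=y'$. From $R\subseteq(f\times g)^{-1}(\Rel_\sqsubseteq(B)(R))$ we get $(f(x),g(y))\in\Rel_\sqsubseteq(B)(R)$. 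Applying $Bh\times Bk$ and using that relation lifting commutes with direct images — more precisely $\Rel(B)((h\times k)(R))\supseteq(Bh\times Bk)(\Rel(B)(R))$, and then monotonicity of $\sqsubseteq$ under $Bh,Bk$ handles the two $\sqsubseteq$ factors in $\Rel_\sqsubseteq(B)$ — we obtain $(Bh(f(x)),Bk(g(y)))\in\Rel_\sqsubseteq(B)(S)$. By the homomorphism equations this is $(f'(x'),g'(y'))\in\Rel_\sqsubseteq(B)(S)$. Hence $S$ is a simulation, so $S\subseteq\mathord\lesssim$, which proves the lemma.

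The main obstacle, and the only place requiring care, is the interplay between the \emph{lax} relation lifting $\Rel_\sqsubseteq(B)=\mathord{\sqsubseteq_{B-}}\circ\Rel(B)(-)\circ\mathord{\sqsubseteq_{B-}}$ and direct images: I need that $\Rel(B)$ commutes with (or at least is monotone under) direct images, which is standard since $BR$ maps onto a subobject witnessing membership, and that the order components $\sqsubseteq_{BX}$ compose correctly after applying $Bh$, which is exactly the monotonicity of $Bh$ built into the definition of an ordered functor. Spelling out the witness chase through $BR\to BS$ induced by $R\to S$ is routine once this is set up. I would state the direct-image compatibility of $\Rel(B)$ as a small intermediate observation (it is folklore, see the cited references on relation lifting) and then the rest is a short diagram chase using the two homomorphism identities.
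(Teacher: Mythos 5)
Your final argument is correct and is essentially the standard proof this lemma rests on: the direct image $(h\times k)(\mathord{\lesssim_f^g})=\gr(h)^\op\circ\mathord{\lesssim_f^g}\circ\gr(k)$ is a simulation between $f'$ and $g'$, using the two homomorphism equations, the direct-image (op-lax) property of $\Rel(B)$, and monotonicity of $Bh,Bk$ from the ordered-functor definition, so it is contained in $\lesssim_{f'}^{g'}$; notably this needs no weak-pullback assumption, matching the lemma's hypotheses. The first half of your write-up (the pulled-back relation $(h\times k)^{-1}(\mathord\lesssim)$, which only yields the converse implication) is a recognized dead end and should simply be deleted, keeping only the push-forward argument.
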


In the remainder of this section we state a few technical properties concerning similarity
on cofree comonads, which will be necessary in the following sections. 
The proofs use Lemma~\ref{lm:sim-pres-hom} and a few basic, standard properties of relation lifting.

Pointwise inequality of coalgebras implies pointwise similarity of coinductive extensions:

\begin{lemma}\label{lm:hom-sim}
	Let $(B, \sqsubseteq)$ be an ordered functor, and 
	let $f$ and $g$ be $B$-coalgebras on a common carrier $X$. If 
	$(f \times g)(\Delta_X) \subseteq {\sqsubseteq_{BX}}$ then 
	$(f^\infty \times g^\infty)(\Delta_X) \subseteq {\lesssim_{B^\infty X}}$.
\end{lemma}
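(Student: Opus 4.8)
The plan is to exhibit a single simulation on $X$ witnessing $f^\infty(x) \lesssim_{B^\infty X} g^\infty(x)$ for every $x$, rather than reasoning tree-by-tree. Recall that $\lesssim_{B^\infty X}$ is the greatest simulation between $\langle \theta_X, \epsilon_X \rangle$ and itself for the ordered functor $(B \times X, \widetilde{\sqsubseteq})$; so by definition of greatest simulation it suffices to produce some relation $R \subseteq B^\infty X \times B^\infty X$ containing $(f^\infty \times g^\infty)(\Delta_X)$ and satisfying $R \subseteq (\langle \theta_X,\epsilon_X\rangle \times \langle\theta_X,\epsilon_X\rangle)^{-1}(\Rel_{\widetilde\sqsubseteq}(B\times X)(R))$. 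The natural candidate is $R = (f^\infty \times g^\infty)(\Delta_X) = \{(f^\infty(x), g^\infty(x)) \mid x \in X\}$, i.e.\ the graph-image of $\langle f^\infty, g^\infty\rangle$.

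The key step is a diagram chase using the coalgebra-homomorphism property of the coinductive extensions together with Lemma~\ref{lm:order-id}. By Lemma~\ref{lm:order-id}, the condition for $R$ to be a simulation for $(B\times X,\widetilde\sqsubseteq)$ splits into (i) $R$ is a simulation between the $B$-parts $\theta_X$ and $\theta_X$ for $(B,\sqsubseteq)$, and (ii) related pairs agree under $\epsilon_X$. Condition (ii) is immediate: if $(f^\infty(x), g^\infty(x)) \in R$ then $\epsilon_X(f^\infty(x)) = x = \epsilon_X(g^\infty(x))$, since $f^\infty$ and $g^\infty$ are coalgebra homomorphisms into the cofree coalgebra and hence commute with the counit ($\epsilon_X \circ f^\infty = \id_X$, as $\langle f^\infty, \id\rangle$ is the coinductive extension of $\langle f, \id\rangle$; likewise for $g$). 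For condition (i), I would unfold: for $x \in X$, $\theta_X(f^\infty(x)) = B f^\infty(f(x))$ and $\theta_X(g^\infty(x)) = B g^\infty(g(x))$, using that $f^\infty, g^\infty$ are $B^\infty$-coalgebra homomorphisms (equivalently $B$-coalgebra homomorphisms from $f$ resp.\ $g$ to $\theta_X$). Now I must show $(Bf^\infty(f(x)), Bg^\infty(g(x))) \in \Rel_\sqsubseteq(B)(R) = {\sqsubseteq_{B B^\infty X}} \circ \Rel(B)(R) \circ {\sqsubseteq_{B B^\infty X}}$. From the hypothesis $(f\times g)(\Delta_X)\subseteq{\sqsubseteq_{BX}}$ we get $f(x) \sqsubseteq_{BX} g(x)$. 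I would then use monotonicity of $Bf^\infty$ (from the ordered-functor axiom, $Bh$ is monotone for any $h$) to pass $\sqsubseteq$ along, and the standard fact that $\Rel(B)(R)$ contains $(Bh \times Bk)(\Rel(B)\langle \ldots\rangle)$ — concretely, that for the diagonal-type relation $R = \mathrm{gr}(\langle f^\infty,g^\infty\rangle)\circ$something, $\Rel(B)(R)$ relates $Bf^\infty(b)$ to $Bg^\infty(b)$ for any $b \in BX$. Chaining: $Bf^\infty(f(x)) \sqsubseteq Bf^\infty(g(x))$ is false in general (wrong direction of argument), so more carefully I take $b = g(x)\in BX$, note $(Bf^\infty(g(x)), Bg^\infty(g(x))) \in \Rel(B)(R)$ because $R \supseteq (f^\infty\times g^\infty)(\Delta_X)$ lifts through $B$, and note $Bf^\infty(f(x)) \sqsubseteq_{BB^\infty X} Bf^\infty(g(x))$ by monotonicity of $Bf^\infty$ applied to $f(x)\sqsubseteq_{BX} g(x)$; composing these two (prefix $\sqsubseteq$, then $\Rel(B)(R)$, then trivial $\sqsubseteq$) lands in $\Rel_\sqsubseteq(B)(R)$ as required.

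The main obstacle is bookkeeping the direction of the order in the $\sqsubseteq \circ \Rel(B)(R) \circ \sqsubseteq$ sandwich and making sure the needed closure property of canonical relation lifting — that $\Rel(B)$ applied to the image of a pairing relates $B$ of the two legs pointwise — is invoked correctly; these are exactly "a few basic, standard properties of relation lifting" the text alludes to (e.g.\ $\mathrm{gr}(h) \subseteq R$ implies $(\mathrm{gr}(Bh)) \subseteq \Rel(B)(R)$, and $\Rel(B)$ preserves composition and is monotone), so I would cite them rather than reprove them. Once condition (i) and (ii) hold, $R$ is a simulation for $(B\times X,\widetilde\sqsubseteq)$ between $\langle\theta_X,\epsilon_X\rangle$ and itself, hence $R \subseteq {\lesssim_{B^\infty X}}$, which is precisely $(f^\infty\times g^\infty)(\Delta_X)\subseteq{\lesssim_{B^\infty X}}$.
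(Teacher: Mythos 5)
Your argument is correct: taking $R = (f^\infty\times g^\infty)(\Delta_X)$, splitting the simulation condition via Lemma~\ref{lm:order-id}, using $\epsilon_X\circ f^\infty = \id = \epsilon_X\circ g^\infty$ and $\theta_X\circ f^\infty = Bf^\infty\circ f$, $\theta_X\circ g^\infty = Bg^\infty\circ g$, and then the chain $Bf^\infty(f(x)) \sqsubseteq_{BB^\infty X} Bf^\infty(g(x)) \mathrel{\Rel(B)(R)} Bg^\infty(g(x))$ does land in $\Rel_\sqsubseteq(B)(R)$, so $R$ is a simulation and hence contained in $\lesssim_{B^\infty X}$. One remark on presentation: the parenthetical claiming ``$Bf^\infty(f(x))\sqsubseteq Bf^\infty(g(x))$ is false in general'' contradicts what you then (correctly) use two lines later --- this inequality is exactly monotonicity of $Bf^\infty$ applied to the hypothesis $f(x)\sqsubseteq_{BX} g(x)$, so the self-correction is unnecessary and should be deleted; likewise the garbled ``$\gr(\langle f^\infty,g^\infty\rangle)\circ$something'' should just be the standard fact that $(h\times k)(\Delta_X)\subseteq R$ implies $(Bh\times Bk)(\Delta_{BX})\subseteq\Rel(B)(R)$. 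Your route differs mildly from the one the paper points at: the paper's intended proof first checks that $\Delta_X$ itself is a simulation between the $(B\times X)$-coalgebras $\langle f,\id\rangle$ and $\langle g,\id\rangle$ (using the hypothesis plus $\Delta_{BX}\subseteq\Rel(B)(\Delta_X)$) and then invokes Lemma~\ref{lm:sim-pres-hom}, since $f^\infty$ and $g^\infty$ are homomorphisms into the cofree coalgebra. You instead verify directly that the image relation on $B^\infty X$ is a simulation, which in effect inlines the proof of Lemma~\ref{lm:sim-pres-hom}; this is slightly more work but self-contained, and, like the paper's proof, it correctly avoids any appeal to weak pullback preservation (needed only for Lemma~\ref{lm:ineq-comonad}).
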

Recall from Section~\ref{sec:bialg} that any $B$-homomorphism yields a $B^\infty$-homomorphism between coinductive extensions. 
A similar fact holds for inequalities.
\begin{lemma}\label{lm:ineq-comonad}
  Let $(B, \sqsubseteq)$ be an ordered functor where $B$ preserves weak pullbacks, and
	let $f \colon X \rightarrow BX$, $g \colon Y \rightarrow BY$ and $h \colon X \rightarrow Y$.
	\begin{itemize}
		\item If $Bh \circ f \sqsubseteq_{BY} g \circ h$ then $B^\infty h \circ f^\infty \lesssim_{B^\infty Y} g^\infty \circ h$, and conversely,
		\item if $Bh \circ f \sqsupseteq_{BY} g \circ h$ then $B^\infty h \circ f^\infty \gtrsim_{B^\infty Y} g^\infty \circ h$.
	\end{itemize}
\end{lemma}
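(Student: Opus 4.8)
The plan is to prove the first item directly and obtain the second as its order-dual. The key observation is that the hypothesis $Bh \circ f \sqsubseteq_{BY} g \circ h$ says exactly that $h$ is a \emph{lax} homomorphism from $f$ to $g$, so its graph $\gr(h)$ ought to be a simulation; I would push this simulation through the cofree construction with Lemma~\ref{lm:sim-pres-hom}, after re-reading both sides of the desired inequality as coinductive extensions for the labelled functor $B \times Y$. For the right-hand side this is free: by definition $g^\infty$ is the coinductive extension of the $(B \times Y)$-coalgebra $\langle g, \id_Y \rangle \colon Y \to BY \times Y$, i.e. the unique $(B \times Y)$-coalgebra homomorphism from $\langle g, \id_Y \rangle$ to the final coalgebra $\langle \theta_Y, \epsilon_Y \rangle$. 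For the left-hand side I would first record the identification $B^\infty h \circ f^\infty = \langle f, h \rangle^\infty$, where $\langle f, h \rangle^\infty \colon X \to B^\infty Y$ is the coinductive extension of $\langle f, h \rangle \colon X \to BX \times Y$; this is a short diagram chase from the naturality of $\theta$ and $\epsilon$ together with the defining equations $\theta_X \circ f^\infty = Bf^\infty \circ f$ and $\epsilon_X \circ f^\infty = \id_X$. Thus both sides are homomorphisms into the common final $(B \times Y)$-coalgebra, whose similarity is precisely $\lesssim_{B^\infty Y}$.

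Next I would show that $\gr(h) \subseteq X \times Y$ is a $\widetilde{\sqsubseteq}$-simulation between $\langle f, h \rangle$ and $\langle g, \id_Y \rangle$. By the evident two-carrier version of Lemma~\ref{lm:order-id} this splits into two requirements: that $\gr(h)$ be a $\sqsubseteq$-simulation between $f$ and $g$, and that the second components agree on related pairs. The latter is immediate, since on a pair $(x, h(x))$ the two label maps return $h(x)$ and $\id_Y(h(x))$. For the former I would use the standard identity $\Rel(B)(\gr(h)) = \gr(Bh)$, which holds for any functor because $\pi_1 \colon \gr(h) \to X$ is a bijection: then $(f(x), g(h(x)))$ lies in $\sqsubseteq_{BX} \circ \Rel(B)(\gr(h)) \circ \sqsubseteq_{BY} = \Rel_{\sqsubseteq}(B)(\gr(h))$, factoring as reflexivity at $f(x)$, the pair $(f(x), Bh(f(x))) \in \gr(Bh)$, and the step $Bh(f(x)) \sqsubseteq_{BY} g(h(x))$ supplied by the hypothesis. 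Hence $\gr(h)$ is contained in the similarity between $\langle f, h \rangle$ and $\langle g, \id_Y \rangle$, that is, $x \lesssim h(x)$ for every $x$. Applying Lemma~\ref{lm:sim-pres-hom} to the two homomorphisms identified above then gives $(B^\infty h \circ f^\infty)(x) \lesssim_{B^\infty Y} g^\infty(h(x))$ for all $x$, which is the first item.

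For the second item I would not prove a genuine converse, but invoke order-duality. The pair $(B, \sqsupseteq)$ is again a weak-pullback-preserving ordered functor, and the hypothesis $Bh \circ f \sqsupseteq_{BY} g \circ h$ is exactly the hypothesis of the first item for $(B, \sqsupseteq)$. Since relation lifting commutes with relational converse, one has $(\Rel_{\sqsubseteq}(B)(R))^\op = \Rel_{\sqsupseteq}(B)(R^\op)$, and moreover $\widetilde{(\sqsupseteq)} = (\widetilde{\sqsubseteq})^\op$; consequently the similarity induced by $(B, \sqsupseteq)$ on $B^\infty Y$ is the opposite $\gtrsim_{B^\infty Y}$ of the one induced by $(B, \sqsubseteq)$. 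The first item applied to $(B, \sqsupseteq)$ therefore reads $B^\infty h \circ f^\infty \gtrsim_{B^\infty Y} g^\infty \circ h$, as required.

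I expect the main obstacle to be the bookkeeping around the labelled functor $B \times Y$: correctly identifying $B^\infty h \circ f^\infty$ with the coinductive extension of $\langle f, h \rangle$, and handling the lax relation lifting $\Rel_{\widetilde{\sqsubseteq}}(B \times Y)$. Lemma~\ref{lm:order-id} is exactly what tames the latter, reducing it to a plain $\sqsubseteq$-simulation plus a triviality about labels. Weak-pullback preservation of $B$ is used only inside the cited lemmas, where it guarantees the good behaviour of $\Rel_{\sqsubseteq}$ and hence of similarity; the graph identity $\Rel(B)(\gr(h)) = \gr(Bh)$ itself requires no assumption on $B$.
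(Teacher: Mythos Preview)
Your proposal is correct and follows precisely the route the paper hints at (``the proofs use Lemma~\ref{lm:sim-pres-hom} and a few basic, standard properties of relation lifting''): you show $\gr(h)$ is a simulation via the identity $\Rel(B)(\gr(h))=\gr(Bh)$, push it through the coinductive extensions with Lemma~\ref{lm:sim-pres-hom}, and dualise the order for the second item. The only point to flag is that you invoke a two-carrier variant of Lemma~\ref{lm:order-id}, which the paper states only for coalgebras on a common carrier; this extension is indeed routine, but you should state it explicitly rather than leave it implicit.
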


\section{Monotone biGSOS specifications}\label{sec:mon-spec}

As discussed in the introduction, GSOS and coGSOS have a straightforward common generalisation,
called \emph{biGSOS} specifications.
Throughout this section we assume $(B, \sqsubseteq)$ is an ordered functor, $B$ has a cofree
comonad and $\Sigma$ has a free monad.
\begin{definition}\label{def:bigsos}
	A \emph{biGSOS specification} is a natural transformation of the form 
	$
		\rho \colon \Sigma B^\infty \Rightarrow B \Sigma^* 
	$.
  A triple $(X, a, f)$ consisting of a set $X$, an algebra $a \colon \Sigma X \rightarrow X$
  and a coalgebra $f \colon X \rightarrow BX$ (i.e., a bialgebra) is called  a \emph{$\rho$-model} if the following diagram commutes:
  $$
  \xymatrix{
  	\Sigma X \ar[rr]^a \ar[d]_{\Sigma f^\infty} 
  		&  
  		& X \ar[d]^f \\
  	\Sigma B^\infty X \ar[r]^{\rho_X} 
  		& B\Sigma^* X \ar[r]^{Ba^*}
  		& BX 
  }
  $$
\end{definition}
If $BX = (\powf X)^A$, then one can obtain biGSOS specifications from concrete rules in the \emph{ntree} format, 
which combines GSOS and safe ntree, allowing lookahead in premises, 
negative premises and complex terms in conclusions. 

Of particular interest are $\rho$-models on the initial algebra $\iota_\emptyset \colon 
\Sigma \Sigma^* \emptyset \rightarrow \Sigma^* \emptyset$: 
  \begin{equation}\label{eq:supp}
  \xymatrix{
  	\Sigma \Sigma^* \emptyset \ar[rr]^{\iota_\emptyset} \ar[d]_{\Sigma f^\infty} 
  		&  
  		& \Sigma^* \emptyset \ar[d]^f \\
  	\Sigma B^\infty \Sigma^* \emptyset \ar[r]^{\rho_{\Sigma^* \emptyset}} 
  		& B\Sigma^* \Sigma^* \emptyset \ar[r]^{B{\mu_\emptyset}}
  		& B\Sigma^* \emptyset
  }
  \end{equation}
(Notice that $\iota_\emptyset^* = \mu_\emptyset$.) We call
these \emph{supported models}. Indeed, for labelled transition systems, this notion coincides
with the standard notion of the supported model of an SOS specification (e.g.,~\cite{AFV}).

In the introduction, we have seen that biGSOS specifications do not necessarily induce
a supported model. But even if they do, such a model is not necessarily unique, and behavioural equivalence is 
not even a congruence, in general, as shown by the following example. 
\begin{example}\label{ex:more-models}
	In this example we consider a signature with 
	constants $c$ and $d$, and unary operators $\sigma$ and $\tau$. Consider the specification
	(represented by concrete rules)
	on labelled transition systems
	where $c$ and $d$ are not assigned any behaviour, 
	and $\sigma$ and $\tau$ are given by the following rules:
	$$
		\frac{x \xrightarrow{a} x' \qquad x' \xrightarrow{a} x''}
		{\sigma(x) \xrightarrow{a} x''}
		\qquad
		\frac{}
		{\tau(x) \xrightarrow{a} \sigma(\tau(x))}
	$$
	The behaviour of $\tau(x)$ is independent of its argument $x$. Which transitions can occur in a supported model? 
	First, for any $t$ there is a transition $\tau(t) \xrightarrow{a} \sigma(\tau(t))$. Moreover, 
	a transition $\sigma(\tau(t)) \xrightarrow{a} t''$ \emph{can} be in the model, although it does not need to be.
	But if it is there, it is supported by an \emph{infinite} proof. 
	
	In fact, one can easily construct a model in which the behaviour of $\sigma(\tau(c))$ is different from that of 
	$\sigma(\tau(d))$---for example, a model where $\sigma(\tau(c))$ does not make any transitions, whereas
	$\sigma(\tau(d)) \xrightarrow{a} t$ for some $t$. Then behavioural equivalence is not a congruence; 
	$c$ is bisimilar to $d$, but $\sigma(\tau(c))$ is not bisimilar to $\sigma(\tau(d))$.
\end{example}
The above example features a specification that has many different interpretations as a supported model.
However, there is only one which makes sense: the \emph{least} model, which only features
\emph{finite} proofs. It is sensible to speak
about the least model of this specification, since it does not contain any negative premises.
More generally, absence of negative premises can be defined based on an ordered functor and the induced similarity
order.
\begin{definition}
	A biGSOS specification $\rho \colon \Sigma B^\infty \Rightarrow B \Sigma^*$ is \emph{monotone}
	if the restriction of $\rho_X \times \rho_X$ to $\Rel(\Sigma)(\lesssim_{B^\infty X})$
	corestricts to $\sqsubseteq_{B\Sigma^*X}$,
	for any set $X$.
\end{definition}
If $\Sigma$ represents an algebraic signature, then monotonicity can be conveniently restated as follows (c.f.~\cite{BonchiPPR17}, where monotone GSOS is characterised in a similar way). For
every operator $\sigma$: 
$$
\frac{b_1 \lesssim_{B^\infty X} c_1 \quad\ldots\quad b_n \lesssim_{B^\infty X} c_n}
 {\rho_X(\sigma(b_1, \ldots, b_n)) \sqsubseteq_{B \Sigma^* X} \rho_X(\sigma(c_1, \ldots, c_n))}
$$
for every set $X$ and every $b_1, \ldots, b_n, c_1, \ldots, c_n \in B^\infty X$. Thus, in a monotone 
specification, if $c_i$ simulates $b_i$ for each $i$, then the behaviour of $\sigma(b_1, \ldots, b_n)$ is ``less than'' 
the behaviour of $\sigma(c_1, \ldots, c_n)$.

In the case of labelled transition systems, it is straightforward that monotonicity rules out (non-trivial use of) negative premises.
Notice that the example specification in the introduction consisting of rules~\eqref{eq:gsos-ex} and~\eqref{eq:cogsos-ex}, 
which does not have a model, is not monotone.
This is no coincidence: every monotone biGSOS specification has a model, if $B\Sigma^*\emptyset$ is a pointed DCPO, as we will see next. 
In fact, the proper canonical choice is the \emph{least} model, corresponding to behaviour obtained in finitely many proof steps. 

\subsection{Models of monotone specifications}

Let $\rho$ be a monotone biGSOS specification.
Suppose $B\Sigma^*\emptyset$ is a pointed DCPO. Then the set of coalgebras
$\coalg{B}_{\Sigma^*\emptyset} = \{f \mid f \colon \Sigma^*\emptyset \rightarrow B\Sigma^*\emptyset \}$, ordered pointwise, is a pointed DCPO as well.

Consider the function $\varphi \colon \coalg{B}_{\Sigma^*\emptyset} \rightarrow \coalg{B}_{\Sigma^*\emptyset}$, defined as follows: 
\begin{equation}\label{eq:phi}
	\varphi(f) = B\mu_\emptyset \circ \rho_{\Sigma^*\emptyset} \circ \Sigma f^\infty \circ \iota_\emptyset^{-1}
\end{equation}
Since $\iota_\emptyset$ is an isomorphism, a function $f$ is a fixed point of $\varphi$ if and only if it is a supported
model of $\rho$ (Equation~\eqref{eq:supp}).
  We are interested in the \emph{least} supported model. To show that
  it exists, since $\coalg{B}_{\Sigma^*\emptyset}$ is a 
  pointed DCPO, it suffices to show that $\varphi$ is monotone.
  

\begin{lemma}\label{lm:phi-mon}
	The function $\varphi$ is monotone.
\end{lemma}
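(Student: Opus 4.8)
The plan is to unfold the definition of $\varphi$ and chain four facts: (i)~Lemma~\ref{lm:hom-sim}, which lifts a pointwise $\sqsubseteq$-inequality of coalgebras to a pointwise $\lesssim$-inequality of their coinductive extensions; (ii)~a relation-lifting observation turning this into a statement about $\Rel(\Sigma)(\lesssim_{B^\infty \Sigma^*\emptyset})$; (iii)~monotonicity of $\rho$; and (iv)~monotonicity of $B\mu_\emptyset$, which holds because $(B,\sqsubseteq)$ is an ordered functor.

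In detail, suppose $f \sqsubseteq g$ in $\coalg{B}_{\Sigma^*\emptyset}$, i.e.\ $(f \times g)(\Delta_{\Sigma^*\emptyset}) \subseteq {\sqsubseteq_{B\Sigma^*\emptyset}}$. Since $\iota_\emptyset^{-1}$ is a bijection, in order to conclude $\varphi(f) \sqsubseteq \varphi(g)$ it suffices to prove, for every $s \in \Sigma\Sigma^*\emptyset$, that $B\mu_\emptyset(\rho_{\Sigma^*\emptyset}(\Sigma f^\infty(s))) \sqsubseteq_{B\Sigma^*\emptyset} B\mu_\emptyset(\rho_{\Sigma^*\emptyset}(\Sigma g^\infty(s)))$. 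First, by Lemma~\ref{lm:hom-sim}, $(f^\infty \times g^\infty)(\Delta_{\Sigma^*\emptyset}) \subseteq {\lesssim_{B^\infty\Sigma^*\emptyset}}$; equivalently, $\langle f^\infty, g^\infty\rangle$ corestricts to $\lesssim_{B^\infty\Sigma^*\emptyset}$ (viewed as a subset of $B^\infty\Sigma^*\emptyset \times B^\infty\Sigma^*\emptyset$), with its two projections equal to $f^\infty$ and $g^\infty$. Applying $\Sigma$ and unfolding the definition of $\Rel(\Sigma)$, the element $\Sigma\langle f^\infty, g^\infty\rangle(s)$ of $\Sigma(\lesssim_{B^\infty\Sigma^*\emptyset})$ witnesses that $(\Sigma f^\infty(s), \Sigma g^\infty(s)) \in \Rel(\Sigma)(\lesssim_{B^\infty\Sigma^*\emptyset})$. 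Monotonicity of $\rho$ then gives $(\rho_{\Sigma^*\emptyset}(\Sigma f^\infty(s)), \rho_{\Sigma^*\emptyset}(\Sigma g^\infty(s))) \in {\sqsubseteq_{B\Sigma^*\Sigma^*\emptyset}}$, and applying the monotone map $B\mu_\emptyset$ preserves this inequality, which is exactly what was needed.

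The only step requiring care is (ii): one must observe that a pointwise $\lesssim$-inequality between two parallel maps $X \to B^\infty X$ is the same data as a single map into the relation $\lesssim_{B^\infty X}$ whose composites with the two projections recover the original maps, and that this presentation is stable under applying $\Sigma$—which is immediate from the definition of the canonical relation lifting but is the conceptual core of the argument. Everything else is routine bookkeeping: $\iota_\emptyset^{-1}$ being a bijection, and $B\mu_\emptyset$ being monotone by the very definition of an ordered functor. (Note that similarity as used throughout is the one induced by $(B \times X, \widetilde{\sqsubseteq})$, consistently with both Lemma~\ref{lm:hom-sim} and the definition of monotonicity, so no reconciliation of notions is needed.)
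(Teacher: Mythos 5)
Your proof is correct and follows essentially the same route as the paper's: Lemma~\ref{lm:hom-sim} to pass from $f \sqsubseteq g$ to $f^\infty \lesssim_{B^\infty\Sigma^*\emptyset} g^\infty$, the standard relation-lifting fact to get $\Sigma f^\infty \mathrel{\Rel(\Sigma)(\lesssim_{B^\infty\Sigma^*\emptyset})} \Sigma g^\infty$, then monotonicity of $\rho$ and of $B\mu_\emptyset$. You merely spell out the relation-lifting step (via the witness $\Sigma\langle f^\infty,g^\infty\rangle(s)$) that the paper leaves implicit.
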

\begin{proof}
  Suppose $f,g \colon \Sigma^*\emptyset \rightarrow B\Sigma^*\emptyset$ and $f \sqsubseteq_{B\Sigma^*\emptyset} g$. 
  By Lemma~\ref{lm:hom-sim}, we have $f^\infty \lesssim_{B^\infty \Sigma^* \emptyset} g^\infty$.
  From standard properties of relation lifting we derive
  $\Sigma f^\infty \mathrel{\Rel(\Sigma)(\lesssim_{B^\infty \Sigma^* \emptyset})} \Sigma g^\infty$
  and now
  the result follows by monotonicity of $\rho$ (assumption) 
  and monotonicity of $B \mu_\emptyset$ ($B$ is ordered).
\end{proof}
\begin{corollary}\label{cor:model}
If $B\Sigma^*\emptyset$ is a pointed DCPO and $\rho$ is a monotone biGSOS specification, then $\rho$ has a least supported model.
\end{corollary}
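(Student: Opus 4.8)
The plan is to obtain the least supported model as the least fixed point of the operator $\varphi$ from~\eqref{eq:phi}. Most of the groundwork is already in place. Since $B\Sigma^*\emptyset$ is a pointed DCPO, so is the function space $\coalg{B}_{\Sigma^*\emptyset}$ with the pointwise order (suprema of directed sets and the bottom element are computed pointwise). Moreover, as $\iota_\emptyset$ is an isomorphism by Lambek's lemma, a coalgebra $f \colon \Sigma^*\emptyset \rightarrow B\Sigma^*\emptyset$ is a fixed point of $\varphi$ if and only if it makes~\eqref{eq:supp} commute, i.e., iff $(\Sigma^*\emptyset, \mu_\emptyset, f)$ is a supported model of $\rho$ (recall $\iota_\emptyset^* = \mu_\emptyset$). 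Hence a least fixed point of $\varphi$, if it exists, is exactly a least supported model, and the corollary reduces to producing such a least fixed point.

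To produce it, I would invoke the standard fixed-point theorem for monotone endofunctions on pointed DCPOs. By Lemma~\ref{lm:phi-mon}, $\varphi$ is monotone, and $\coalg{B}_{\Sigma^*\emptyset}$ is a pointed DCPO, so $\varphi$ has a least fixed point, obtained by transfinite iteration from $\bot$: put $\varphi^0(\bot) = \bot$, $\varphi^{\alpha+1}(\bot) = \varphi(\varphi^\alpha(\bot))$, and $\varphi^\lambda(\bot) = \bigsqcup_{\alpha < \lambda} \varphi^\alpha(\bot)$ at limit ordinals. Monotonicity yields, by transfinite induction, that this is an increasing chain; directed-completeness makes the limit stages well-defined; a cardinality argument forces the chain to stabilise at some ordinal, producing a fixed point; and the usual induction shows $\varphi^\alpha(\bot) \sqsubseteq h$ for every pre-fixed point $h$, so the constructed fixed point is the least one. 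Combining this with the correspondence above gives the least supported model.

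The only point that deserves care is that $\varphi$ is established to be merely \emph{monotone}, not Scott-continuous, so Kleene iteration up to $\omega$ need not reach a fixed point; this is precisely why directed-completeness (rather than only $\omega$-completeness) is assumed and why the transfinite iteration above is the right tool. Since the underlying least-fixed-point theorem is entirely standard, no genuine obstacle is expected here: the real work has already been done in Lemma~\ref{lm:phi-mon}, which rests on monotonicity of $\rho$ and the properties of similarity on the cofree comonad from Section~\ref{sec:similarity}.
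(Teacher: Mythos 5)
Your proof is correct and follows exactly the paper's route: fixed points of $\varphi$ correspond to supported models via the isomorphism $\iota_\emptyset$, and the least one exists because $\coalg{B}_{\Sigma^*\emptyset}$ is a pointed DCPO and $\varphi$ is monotone by Lemma~\ref{lm:phi-mon}, using the standard least-fixed-point theorem (with transfinite iteration, since only monotonicity is available). Nothing is missing.
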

The condition of the Corollary is satisfied if $B$ is of the
form $B = G + 1$ (c.f. Example~\ref{ex:ordered-functors-part}), that is, 
$B = G + 1$ for some functor $G$ (where the element in the singleton 1
is interpreted as the least element of the pointed DCPO). 
Consider, as an example, the functor $BX = A \times X + 1$ of finite and infinite
streams over $A$. Any specification that does not mention termination (i.e., a specification
for the functor $GX = A \times X$) yields a monotone specification for $B$. 
\begin{example}
	Consider the following specification (in terms of rules) for the functor $BX = \mathbb{N} \times X + 1$ of (possibly terminating) stream systems
	over the natural numbers.
	It specifies a unary operator $\sigma$, a binary operator $\oplus$, 
	infinitely many unary operators $m \otimes -$ (one for each $m \in \mathbb{N}$), and
	constants $\mathit{ones}, \mathit{pos}$, $c$:
 	$$
 	\frac{x \xrightarrow{n} x' \qquad x' \xrightarrow{m} x''}
 	{\sigma(x) \xrightarrow{n} n \otimes (m \otimes \sigma(x''))}
 	\qquad 
 	\frac{x \xrightarrow{n} x' \qquad y \xrightarrow{m} y'}
 	{x \oplus y \xrightarrow{n+m} x' \oplus y'}
 	\qquad 
 	\frac{x \xrightarrow{n} x'}
 	{m \otimes x \xrightarrow{m \times n} m \otimes x'} 
 	$$
 	$$
 	\frac{}{\mathit{ones} \xrightarrow{1} \mathit{ones}}
 	\qquad
 	\frac{}{\mathit{pos} \xrightarrow{1} \mathit{ones} \oplus \mathit{pos}}
 	\qquad 
 	\frac{}{c \xrightarrow{1} \sigma(c)}
 	$$
 	where $+$ and $\times$ denote addition and multiplication of natural numbers, respectively.
 	This induces a monotone biGSOS specification; the rule for $\sigma$ is GSOS nor coGSOS, since
 	it uses both lookahead and a complex conclusion. By the above Corollary, it has a model.
 	The coinductive extension maps $\mathit{pos}$
 	to the increasing stream of positive integers, and $\sigma(\mathit{pos})$
 	is the stream $(1,6,120, \ldots) = (1!, 3!, 5!, \ldots)$. But $c$ does not represent
 	an infinite stream, since $\sigma(c)$ is undefined. 
\end{example}
The case of \emph{labelled transition systems} is a bit more subtle. The problem is that $(\powf \Sigma^* \emptyset)^A$
and $(\powc \Sigma^* \emptyset)^A$ are not DCPOs, in general, whereas the functor $(\pow -)^A$
does not have a cofree comonad. However, if the set of closed terms $\Sigma^* \emptyset$ is countable, then $(\powc\Sigma^* \emptyset)^A$
is a pointed DCPO, and thus Corollary~\ref{cor:model} applies. 
The specification in Example~\ref{ex:more-models} can be viewed as a specification for the functor
$(\powc -)^A$, and it has a countable set of terms. Therefore it has, by the Corollary,
a least supported model. In this model, the behaviour of $\sigma(t)$
is empty, for any $t \in \Sigma^* \emptyset$.


\section{Distributive laws for biGSOS specifications}\label{sec:dl}

In the previous section we have seen how to construct a least supported model of a monotone biGSOS specification,
as the least fixed point of a monotone function. 
In the present section 
we show that, given a monotone biGSOS specification, the construction of a least model generalizes to a \emph{lifting} of 
the free monad $\Sigma^*$ to the category of $B$-coalgebras. It then immediately follows that there exists a canonical
distributive law of the monad $\Sigma^*$ over the comonad $B^\infty$, and that the (unique) operational model of this
distributive law corresponds to the least supported model as constructed above. 

In order to proceed we define a \emph{$\DCPO$-ordered functor} as an ordered functor (Section~\ref{sec:similarity}) where $\PreOrd$ is 
replaced by $\DCPO$. Below we assume that $(B, \sqsubseteq)$ is $\DCPO$-ordered, and $\Sigma$
and $B$ are as before (having a free monad and cofree comonad respectively).
\begin{example}
A general class of functors that are $\DCPO$-ordered are those of the form $B + 1$, where 
the singleton $1$ is interpreted as the least element and all other distinct elements are incomparable (see Example~\ref{ex:ordered-functors-part}). Another example is the functor $(\pow -)^A$ of labelled transition systems
with arbitrary branching, but this example can not be treated here 
because there exists no cofree comonad for it. The case of labelled transition systems is 
treated in Section~\ref{sec:cpo}.
\end{example}
Let $\coalg{B}_{\Sigma^* X}$ be the set of $B$-coalgebras with carrier $\Sigma^* X$,
pointwise ordered as a DCPO by the order on $B$. 
The lifting of $\Sigma^*$ to $\coalg{B}$ that we are about to define maps a coalgebra $c \colon X \rightarrow BX$
to the least coalgebra $\liftco{c} \colon \Sigma^*X \rightarrow B\Sigma^*X$, w.r.t.\ the above
order on $\coalg{B}_{\Sigma^*X}$, making the following diagram commute. 
$$
\xymatrix{
	\Sigma B^\infty \Sigma^*X \ar[r]^{\rho_{\Sigma^*X}}
		& B\Sigma^* \Sigma^* X \ar[r]^{B\mu_X}
		& B\Sigma^*X
		& BX \ar[l]_{B\eta_X} \\
	\Sigma \Sigma^*X \ar[rr]_{\iota_X} \ar[u]^{\Sigma (\liftco{c})^\infty} 
		&  
		& \Sigma^*X \ar[u]^{\lift{c}}   
		& X \ar[l]^{\eta_X} \ar[u]_c
}
$$
Equivalently, $\liftco{c}$ is the least fixed point
of the operator $\varphi_c \colon \coalg{B}_{{\Sigma^* X}} \rightarrow \coalg{B}_{{\Sigma^* X}}$
defined by
$$
\varphi_c(f) = [B\mu_X \circ \rho_{\Sigma^*\emptyset} \circ \Sigma f^\infty, B\eta_X \circ c] \circ [\iota_X, \eta_X]^{-1} \,.
$$
Following the proof of Lemma~\ref{lm:phi-mon} it is 
easy to verify:
\begin{lemma}
	For any $c \colon X \rightarrow BX$, the function $\varphi_c$ is monotone. 
\end{lemma}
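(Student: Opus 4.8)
The plan is to follow the proof of Lemma~\ref{lm:phi-mon} almost verbatim; $\varphi_c$ differs from the $\varphi$ of Section~\ref{sec:mon-spec} only in that it is a copairing with the extra, $f$-independent component $B\eta_X \circ c$ and is precomposed with the fixed isomorphism $[\iota_X,\eta_X]^{-1}$ rather than with $\iota_\emptyset^{-1}$. So first I would fix $f,g \colon \Sigma^* X \rightarrow B\Sigma^* X$ with $f \sqsubseteq_{B\Sigma^* X} g$ pointwise, i.e.\ $(f \times g)(\Delta_{\Sigma^* X}) \subseteq {\sqsubseteq_{B\Sigma^* X}}$, and reduce the claim $\varphi_c(f) \sqsubseteq \varphi_c(g)$ to a statement not mentioning the isomorphism or the copairing. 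Precomposition with a fixed function is monotone for the pointwise order, and a copairing $[u,v]$ is pointwise below $[u',v']$ exactly when $u$ is pointwise below $u'$ and $v$ pointwise below $v'$; since the $X$-components coincide, it therefore suffices to show
$$
B\mu_X \circ \rho_{\Sigma^* X} \circ \Sigma f^\infty \;\sqsubseteq\; B\mu_X \circ \rho_{\Sigma^* X} \circ \Sigma g^\infty
$$
pointwise on $\Sigma\Sigma^* X$ --- which is precisely the inequality established inside the proof of Lemma~\ref{lm:phi-mon}, with $\emptyset$ replaced by $X$.

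For that inequality I would then re-run the same chain of monotonicity steps. From $f \sqsubseteq_{B\Sigma^* X} g$ and Lemma~\ref{lm:hom-sim} we get $f^\infty \lesssim_{B^\infty \Sigma^* X} g^\infty$. A standard property of relation lifting --- factor $\langle f^\infty, g^\infty\rangle$ through the relation ${\lesssim_{B^\infty \Sigma^* X}}$ and apply $\Sigma$ --- then yields $\Sigma f^\infty \mathrel{\Rel(\Sigma)(\lesssim_{B^\infty \Sigma^* X})} \Sigma g^\infty$. Instantiating monotonicity of $\rho$ at the set $\Sigma^* X$, the restriction of $\rho_{\Sigma^* X} \times \rho_{\Sigma^* X}$ to $\Rel(\Sigma)(\lesssim_{B^\infty \Sigma^* X})$ corestricts to ${\sqsubseteq_{B\Sigma^*\Sigma^* X}}$, so $\rho_{\Sigma^* X} \circ \Sigma f^\infty \sqsubseteq_{B\Sigma^*\Sigma^* X} \rho_{\Sigma^* X} \circ \Sigma g^\infty$. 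Finally, $(B,\sqsubseteq)$ being (in particular) ordered makes $B\mu_X$ monotone, so postcomposition with it preserves this inequality and lands it in ${\sqsubseteq_{B\Sigma^* X}}$, as required.

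I do not anticipate a genuine obstacle --- as the paper notes, it is easy to verify. The only things needing a little care are purely notational: distinguishing the two occurrences of $B$ (on $B\Sigma^*\Sigma^* X$ after $\rho$, and on $B\Sigma^* X$ after $B\mu_X$), reading monotonicity of the specification off at the carrier $\Sigma^* X$ rather than at $\emptyset$, and the elementary bookkeeping that copairings and precomposition with a fixed map respect the pointwise order. Everything else is a direct transcription of the argument for Lemma~\ref{lm:phi-mon}.
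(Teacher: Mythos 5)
Your proposal is correct and follows exactly the route the paper intends: the paper gives no separate proof but states that the claim follows the proof of Lemma~\ref{lm:phi-mon}, and your argument is precisely that proof (Lemma~\ref{lm:hom-sim}, relation lifting, monotonicity of $\rho$ and of $B\mu_X$) plus the routine bookkeeping for the copairing with the $f$-independent component $B\eta_X \circ c$ and the precomposition with $[\iota_X,\eta_X]^{-1}$. You even silently correct the paper's typo $\rho_{\Sigma^*\emptyset}$ to $\rho_{\Sigma^*X}$ in the definition of $\varphi_c$, which is the intended reading.
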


For the lifting of $\Sigma^*$,
we need to show that the above construction preserves coalgebra morphisms.

\begin{theorem}\label{thm:hom-pres}
	The functor $\lift{\Sigma^*} \colon \coalg{B} \rightarrow \coalg{B}$ defined by
	$$
		\lift{\Sigma^*}(X, c) = (\Sigma^*X, \liftco{c}) \qquad \text{ and } \qquad \lift{\Sigma^*}(h) = \Sigma^*h
	$$
	is a lifting of the functor $\Sigma^*$.
\end{theorem}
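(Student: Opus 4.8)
The plan is to verify that $\lift{\Sigma^*}$ is a well-defined functor and that it commutes with the forgetful functor to $\Set$. The latter is immediate from the definition, since on objects the carrier of $\lift{\Sigma^*}(X,c)$ is $\Sigma^*X$ and on morphisms $\lift{\Sigma^*}(h) = \Sigma^*h$, so composition of $\lift{\Sigma^*}$ with the forgetful functor is exactly $\Sigma^*$ composed with the forgetful functor. Functoriality (preservation of identities and composites) then also follows for free, because $\Sigma^*$ is itself a functor on $\Set$ and the forgetful functor $\coalg{B} \to \Set$ is faithful. So the entire content of the theorem is the claim that $\lift{\Sigma^*}$ is well-defined on morphisms: if $h \colon (X,c) \to (Y,d)$ is a $B$-coalgebra homomorphism, i.e.\ $Bh \circ c = d \circ h$, then $\Sigma^*h \colon \Sigma^*X \to \Sigma^*Y$ is a $B$-coalgebra homomorphism from $(\Sigma^*X, \liftco{c})$ to $(\Sigma^*Y, \liftco{d})$, that is, $B(\Sigma^*h) \circ \liftco{c} = \liftco{d} \circ \Sigma^*h$.

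First I would set up the two inequalities whose conjunction gives the desired equality, and prove each by a least-fixed-point / induction argument. For the ``$\sqsubseteq$'' direction, I would show that $B(\Sigma^*h) \circ \liftco{c}$ is below some coalgebra on $\Sigma^*X$ related to $\liftco{d}$; more precisely, since $\liftco{c}$ is the \emph{least} fixed point of $\varphi_c$, it suffices to exhibit a coalgebra $f$ on $\Sigma^*X$ with $\varphi_c(f) \sqsubseteq f$ and $B(\Sigma^*h) \circ f$ comparable to $\liftco{d} \circ \Sigma^*h$; the natural candidate built from $\liftco{d}$ via $h$ does the job. Here the key computation is that $\varphi_c$ and $\varphi_d$ are compatible along $h$: using naturality of $\rho$, of $\iota$, of $\eta$, of $\mu$, and the fact that $(\liftco{d})^\infty$ relates to $(\liftco{c})^\infty$ along $\Sigma^*h$ via Lemma~\ref{lm:ineq-comonad} (the coinductive-extension analogue of homomorphism preservation, combined with the inequality version), one shows $\varphi_d$ ``pushed back'' along $h$ dominates, resp.\ is dominated by, $\varphi_c$. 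Monotonicity of $\varphi_c$ (the preceding Lemma) and of $B(\Sigma^*h)$ (since $B$ is $\DCPO$-ordered, hence in particular $\PreOrd$-ordered) then let the induction go through, using that least fixed points of monotone maps on pointed DCPOs are reached by (transfinite) iteration from $\bot$ and that $B(\Sigma^*h)(\bot) \sqsubseteq \bot$ need not hold—so I would instead iterate $\varphi_c$ and $\varphi_d$ in lockstep and prove $B(\Sigma^*h) \circ \varphi_c^{(\alpha)}(\bot_{\Sigma^*X}) \sqsubseteq \varphi_d^{(\alpha)}(\bot_{\Sigma^*Y}) \circ \Sigma^*h$ by transfinite induction on $\alpha$, passing through limits by continuity of $B(\Sigma^*h)$ and pointwise suprema. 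The reverse inequality is symmetric but slightly more delicate: it requires that $\varphi_c^{(\alpha)}(\bot) \circ$ precomposed appropriately is \emph{at least} the corresponding iterate pulled back from $Y$; the cleanest route is to observe that along a coalgebra homomorphism the two least fixed points must in fact satisfy both inequalities because $\liftco{d} \circ \Sigma^* h$ also arises as a fixed point of a suitably transported operator on $\coalg{B}_{\Sigma^*X}$, and least fixed points transported along a monotone, continuous, homomorphic map coincide.

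Concretely, I expect to prove the single clean statement: for a $B$-homomorphism $h \colon (X,c)\to(Y,d)$, the map $f \mapsto B(\Sigma^*h)\circ f \circ (\text{nothing})$—or rather the observation that $\Sigma^*h$ induces a functor relating the two fixed-point problems—yields $B(\Sigma^*h)\circ\varphi_c(f) = \varphi_d(\text{``}f\text{ along }h\text{''})\circ \Sigma^*h$ whenever $f$ itself is compatible, which by the universal property of free algebras reduces to checking the two components of $[\iota_X,\eta_X]^{-1}$ separately: on the $\eta_X$ component one uses $Bh\circ c = d\circ h$ directly, and on the $\iota_X$ component one uses naturality of $\rho$ (so $\rho_{\Sigma^*Y}\circ\Sigma B^\infty \Sigma^*h = B\Sigma^*\Sigma^*h\circ\rho_{\Sigma^*X}$), naturality of $\mu$, and the compatibility of coinductive extensions with $\Sigma^*h$ furnished by Lemma~\ref{lm:ineq-comonad} applied to $\liftco{c}$, $\liftco{d}$ and $\Sigma^*h$ (noting $B(\Sigma^*h)\circ\liftco{c} \lesssim \liftco{d}\circ\Sigma^*h$ gives $(\liftco{c})^\infty$ related to $(\liftco{d})^\infty$).

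The main obstacle, I expect, is precisely the interplay between \emph{inequalities} and the coinductive extension $(-)^\infty$ inside $\varphi_c$: one cannot simply say ``$\Sigma^*h$ is a homomorphism of the transported coalgebra'' and invoke the $\Set$-level fact from Section~\ref{sec:bialg} that homomorphisms are preserved by $(-)^\infty$, because at intermediate stages of the fixed-point iteration we only have $\sqsubseteq$, not equality, so Lemma~\ref{lm:ineq-comonad} (which requires $B$ to preserve weak pullbacks—an assumption I would need to flag, or which may be subsumed since we work over $\Set$ with well-behaved $B$) must be used in place of the exact statement, and the two directions of the inequality must be tracked separately through the transfinite iteration, meeting only in the limit to deliver the equality $B(\Sigma^*h)\circ\liftco c = \liftco d \circ \Sigma^* h$. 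Everything else—functoriality, commutation with the forgetful functor, monotonicity of $\varphi_c$—is routine or already established.
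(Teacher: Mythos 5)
Your overall skeleton---reduce everything to the single identity $B\Sigma^*h \circ \liftco{c} = \liftco{d}\circ\Sigma^*h$ for a $B$-homomorphism $h\colon(X,c)\to(Y,d)$, and prove it by transfinite induction over the initial chains of $\varphi_c$ and $\varphi_d$ run in lockstep, with limits handled by continuity of $B\Sigma^*h$ and successors by a diagram chase through naturality of $\iota$, $\eta$, $\rho$, $\mu$---is exactly the paper's. But the pivot of your argument rests on a false premise: you claim that at intermediate stages of the iteration ``we only have $\sqsubseteq$, not equality'', so that the exact $\Set$-level fact that coinductive extension preserves homomorphisms is unavailable and Lemma~\ref{lm:ineq-comonad} must be used instead. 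In fact exact equality is an invariant of the lockstep iteration: if $B\Sigma^*h\circ f = g\circ\Sigma^*h$, then $\Sigma^*h$ is an honest $B$-coalgebra homomorphism from $f$ to $g$, hence a $B^\infty$-coalgebra homomorphism between $f^\infty$ and $g^\infty$, and the successor step $B\Sigma^*h\circ\varphi_c(f)=\varphi_d(g)\circ\Sigma^*h$ follows from naturality of $[\iota,\eta]$, $\rho$, $\mu$, $\eta$ together with $Bh\circ c=d\circ h$; the limit (and base, as the empty family) case follows from continuity of $B\Sigma^*h$. That is the paper's proof, and it needs neither monotonicity of $\rho$ nor Lemma~\ref{lm:ineq-comonad}.

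Your detour is not merely longer; it creates genuine problems. First, Lemma~\ref{lm:ineq-comonad} requires $B$ to preserve weak pullbacks, a hypothesis Theorem~\ref{thm:hom-pres} does not carry (it enters only with Theorem~\ref{thm:mu-hom} and Theorem~\ref{thm:dl}), so as written you prove a weaker statement---you flag this, but it is avoidable. Second, the reverse inequality is not actually established: the pre-fixed-point ``candidate built from $\liftco{d}$ via $h$'' cannot in general be formed on $\Sigma^*X$ (one would have to pull a map $\Sigma^*X\to B\Sigma^*Y$ back along $B\Sigma^*h$), and the closing appeal to ``least fixed points transported along a monotone, continuous, homomorphic map coincide'' is essentially the statement to be proved, not a citable principle. (A symmetric lockstep induction using the second bullet of Lemma~\ref{lm:ineq-comonad} would repair this half, but again only under the extra weak-pullback assumption, and it is superseded by the equality invariant anyway.) Finally, your worry that $B(\Sigma^*h)(\bot)\sqsubseteq\bot$ might fail is not removed by switching to lockstep iteration---the $\alpha=0$ instance of your own invariant is precisely that inequality; the paper absorbs the base case into the limit case as the empty directed family, relying on continuity of $B\Sigma^*h$ in the sense of the $\DCPO$-ordering, and your argument needs the same convention.
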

\begin{proof}
	Let $(X,c)$ and $(Y,d)$ be $B\Sigma^*$-coalgebras. We need to prove that, if $h \colon X \rightarrow Y$ is a coalgebra
	homomorphism from $c$ to $d$, then $\Sigma^*h$ is a homomorphism from $\liftco{c}$ to $\liftco{d}$.
	
The proof is by transfinite induction on the iterative
construction of $\liftco{c}$ and $\liftco{d}$ as limits of the ordinal-indexed initial chains of $\varphi_c$
and $\varphi_d$ respectively. 
For the limit (and base) case, given a (possibly empty) directed family of coalgebras $f_i \colon \Sigma^* X \rightarrow B\Sigma^*X$
and another directed family $g_i \colon \Sigma^* Y \rightarrow B\Sigma^*Y$, such that $B \Sigma^* h \circ f_i = g_i \circ \Sigma^* h$ for all $i$, we have
$B\Sigma^* h \circ \bigvee_i f_i = \bigvee_i (B\Sigma^* h \circ f_i) = \bigvee_i (g_i \circ \Sigma^* h) = (\bigvee_i g_i) \circ \Sigma^* h$ by continuity of $B\Sigma^* h$ and assumption.

Let $f \colon \Sigma^*X \rightarrow B\Sigma^*X$ and $g \colon \Sigma^* Y \rightarrow B\Sigma^* Y$ be such that $B\Sigma^*h \circ f = g \circ \Sigma^* h$. To prove:
$B\Sigma^*h \circ \varphi_c(f) = \varphi_d(g) \circ \Sigma^*h$, i.e., commutativity of the outside of:
$$
 \xymatrix@C=1.2cm{
 \Sigma^*X \ar[r]^-{[\iota_X,\eta_X]^{-1}} \ar[d]_{\Sigma^* h}  
	& \Sigma \Sigma^*X + X \ar[r]^-{\Sigma f^\infty + c} \ar[d]^{\Sigma \Sigma^* h + h} 
	& \Sigma B^\infty \Sigma^* X + BX \ar[r]^-{\rho_{\Sigma^*X} + \id} \ar[d]^{\Sigma B^\infty \Sigma^* h + Bh} 
	& B\Sigma^* \Sigma^* X + BX \ar[r]^-{[B \mu_X, B\eta_X]} \ar[d]^{B\Sigma^*\Sigma^*h + Bh} 
	& B\Sigma^* X \ar[d]^{B\Sigma^*h} \\
 \Sigma^*Y \ar[r]_-{[\iota_Y,\eta_Y]^{-1}} 
 	& \Sigma \Sigma^*Y + Y \ar[r]_-{\Sigma g^\infty + d} 
 	& \Sigma B^\infty \Sigma^* Y + BY \ar[r]_-{\rho_{\Sigma^*Y} + \id}
	& B\Sigma^* \Sigma^* Y + BY \ar[r]_-{[B \mu_Y, B\eta_Y]} 
	& B\Sigma^* Y
}
$$
From left to right, the first square commutes by naturality of $[\iota,\eta]$ (and the fact that 
it is an isomorphism), the second by assumption that $\Sigma^*h$ is a $B$-coalgebra homomorphism from $f$ to $g$ (and 
therefore a $B^\infty$-coalgebra homomorphism) and the assumption that $h$ is a coalgebra homomorphism from $c$ to $d$, 
the third by naturality of $\rho$, and the fourth by naturality of $\mu$ and $\eta$.
\end{proof}

We show that the (free) \emph{monad} $(\Sigma^*,\eta,\mu)$ lifts to $\coalg{B}$.
This is the heart of the matter. The main proof obligation is to
show that $\mu_X$ is a coalgebra homomorphism from $\lift{\Sigma^*}(\lift{\Sigma^*}(X,c))$
to $\lift{\Sigma^*}(X,c)$, for any $B$-coalgebra $(X,c)$. 

\begin{theorem}\label{thm:mu-hom}
	The \emph{monad} $(\Sigma^*,\eta,\mu)$ on $\Set$ lifts to the monad 
	$(\lift{\Sigma^*},\eta,\mu)$ on $\coalg{B}$, if $B$ preserves weak pullbacks. 
\end{theorem}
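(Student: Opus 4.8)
The plan is to exhibit $(\lift{\Sigma^*}, \eta, \mu)$ as a lifting of the monad $(\Sigma^*,\eta,\mu)$ to $\coalg{B}$. Since Theorem~\ref{thm:hom-pres} already gives the lifting $\lift{\Sigma^*}$ of the underlying functor, and since the forgetful functor $U \colon \coalg{B} \rightarrow \Set$ is faithful, it suffices to check that $\eta_X$ and $\mu_X$ are $B$-coalgebra homomorphisms of the appropriate type for every $B$-coalgebra $(X,c)$; the monad laws then transfer for free along $U$. For $\eta_X$ the required statement is that $\eta_X$ is a homomorphism from $c$ to $\liftco{c}$, which is exactly the commutativity of the right-hand triangle in the diagram defining $\liftco{c}$ (the square $B\eta_X \circ c = \liftco{c} \circ \eta_X$), so this is immediate from the definition of $\liftco{c}$ as a fixed point of $\varphi_c$. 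The real content is the claim highlighted before the statement: $\mu_X$ is a coalgebra homomorphism from $\lift{\Sigma^*}\lift{\Sigma^*}(X,c) = (\Sigma^*\Sigma^*X, \overline{\liftco{c}})$ to $\lift{\Sigma^*}(X,c) = (\Sigma^*X, \liftco{c})$, i.e.
\begin{equation}\label{eq:mu-hom-goal}
  B\mu_X \circ \overline{\liftco{c}} = \liftco{c} \circ \mu_X \,.
\end{equation}

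To prove \eqref{eq:mu-hom-goal} I would establish two inequalities. For ``$\sqsubseteq$'', observe that $\liftco{c} \circ \mu_X \colon \Sigma^*\Sigma^*X \rightarrow B\Sigma^*\Sigma^*X$ is, after postcomposition along the functorial action, a candidate competing with $\overline{\liftco{c}}$: one shows that the coalgebra $B\mu_X^{-1}(\cdots)$ — more precisely, that $\liftco{c}\circ\mu_X$ factors through the $\varphi$-style operator for $\overline{\liftco{c}}$ — so that by the \emph{least} fixed point property of $\overline{\liftco{c}}$ one gets $\overline{\liftco{c}} \sqsubseteq$ the corresponding thing, hence $B\mu_X \circ \overline{\liftco{c}} \sqsubseteq \liftco{c}\circ\mu_X$ using monotonicity of $B\mu_X$. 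Concretely this amounts to checking that $f := \Sigma^*$-reindexing of $\liftco{c}\circ\mu_X$ is a (pre-)fixed point of $\varphi_{\liftco{c}}$, which unfolds via: $\mu$ is an algebra homomorphism $[\iota,\eta]$-wise, naturality of $\rho$, the monad associativity/unit laws for $\mu$, and the already-proved fact (Theorem~\ref{thm:hom-pres}) that $\mu_X$, being built from $\Sigma^*h$-type maps, interacts correctly with coinductive extensions — here is where $B$ preserving weak pullbacks enters, through Lemma~\ref{lm:ineq-comonad}, to turn an inequality between $B$-coalgebras into an inequality between their coinductive extensions so that $\Sigma(\overline{\liftco{c}})^\infty$ can be compared with $\Sigma(\liftco{c}\circ\mu_X)^\infty$ after applying $\Sigma B^\infty\mu_X$.

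For the reverse inequality ``$\sqsupseteq$'' in \eqref{eq:mu-hom-goal}, I would run the symmetric argument: show that $B\mu_X \circ \overline{\liftco{c}}$ is itself a fixed point (or pre-fixed point from below) of $\varphi_{\liftco{c}}\circ(\text{reindex along }\mu_X)$, equivalently that $\liftco{c}\circ\mu_X$ is a pre-fixed point of some operator whose least fixed point is dominated by $B\mu_X\circ\overline{\liftco{c}}$, again invoking the least-fixed-point characterisation and the second bullet of Lemma~\ref{lm:ineq-comonad} for the opposite inequality direction. Combining the two gives equality. A clean way to package both directions uniformly is transfinite induction on the initial chains of the two operators $\varphi_{\overline{\liftco{c}}}$ (on $\coalg{B}_{\Sigma^*\Sigma^*X}$) and $\varphi_{\liftco{c}}$ (on $\coalg{B}_{\Sigma^*X}$) simultaneously, exactly as in the proof of Theorem~\ref{thm:hom-pres}: at limit/base stages use continuity of $B\mu_X$ and directedness; at successor stages use naturality of $\rho$, $\iota$, $\eta$, the monad laws for $\mu$, and Lemma~\ref{lm:ineq-comonad} to push the inductive (in)equality through $(-)^\infty$.

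The main obstacle is the successor step of this induction, specifically handling the subterm $\Sigma(\overline{\liftco{c}})^\infty$ versus $\Sigma(\liftco{c})^\infty \circ \Sigma B^\infty \mu_X$ (roughly, ``unfold at the outer $\Sigma^*$ layer, then apply the inner lifted coalgebra'' vs.\ ``apply the doubly-lifted coalgebra''): relating these requires knowing that the coinductive extension of $\overline{\liftco{c}}$ and that of $\liftco{c}$ composed with $\mu_X$ are similar in both directions, which is precisely what weak-pullback preservation buys us via Lemma~\ref{lm:ineq-comonad}, but only \emph{given} that the underlying $B$-coalgebra (in)equality already holds — so the induction hypothesis must be strong enough (an equality of $B$-coalgebras, not merely of coinductive extensions) to feed the lemma, and one must be careful that the chain being inducted over on the left lives over $\Sigma^*\Sigma^*X$ while the one on the right lives over $\Sigma^*X$, with $\mu_X$ (and $B\mu_X$, $\Sigma^*\mu_X$, $B^\infty\mu_X$) mediating at every stage and every such mediation being continuous and natural. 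Once the bookkeeping of these reindexings is set up correctly, each successor step is a diagram chase of the same shape as in Theorem~\ref{thm:hom-pres}.
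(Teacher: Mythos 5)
Your reduction is the right one, and it is the paper's: by faithfulness of the forgetful functor it suffices to check that $\eta_X$ and $\mu_X$ are coalgebra homomorphisms; $\eta_X$ is immediate from the defining diagram of $\liftco{c}$; and the substance is $B\mu_X \circ \liftco{\liftco{c}} = \liftco{c} \circ \mu_X$, proved as two inequalities by transfinite induction along the initial chains of the fixed-point operators, where at each successor step Lemma~\ref{lm:ineq-comonad} (this is exactly where weak pullback preservation enters) converts the inductive inequality of $B$-coalgebra maps into a pointwise similarity of coinductive extensions, after which monotonicity of $\rho$, naturality of $\rho$, $\iota$, $\eta$, the monad laws, and monotonicity/continuity of $B\mu_X$ close the diagram. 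So the core of your first two paragraphs is sound and follows the paper's route.

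However, the ``clean packaging'' you propose at the end --- one simultaneous induction on the two chains with an \emph{equality} of $B$-coalgebras as induction hypothesis --- would fail: the chains are not related stage-wise by $\mu_X$. Already at the first step, on the $\eta$-component, $B\mu_X \circ \varphi_{\liftco{c}}(\bot) \circ \eta_{\Sigma^*X} = B(\mu_X \circ \eta_{\Sigma^*X}) \circ \liftco{c} = \liftco{c}$, i.e.\ the full least fixed point, whereas $\varphi_c(\bot) \circ \mu_X \circ \eta_{\Sigma^*X} = \varphi_c(\bot)$ is only the first approximant; so stage-wise equality is false, and an equality hypothesis is neither available nor needed --- Lemma~\ref{lm:ineq-comonad} only asks for an inequality. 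The workable bookkeeping is the one you sketch before that: prove $B\mu_X \circ f_\alpha \sqsubseteq \liftco{c} \circ \mu_X$ by induction on the chain $(f_\alpha)$ for $\varphi_{\liftco{c}}$ with the \emph{full} $\liftco{c}$ fixed on the right (first bullet of Lemma~\ref{lm:ineq-comonad} with $h = \mu_X$), and separately $g_\alpha \circ \mu_X \sqsubseteq B\mu_X \circ \liftco{\liftco{c}}$ by induction on the chain $(g_\alpha)$ for $\varphi_c$ (second bullet; on the $\eta$-component one additionally uses $\varphi_c(g_\alpha) \sqsubseteq \liftco{c}$, i.e.\ that all approximants lie below the least fixed point). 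Two smaller slips to repair: $\liftco{c} \circ \mu_X$ has codomain $B\Sigma^*X$, not $B\Sigma^*\Sigma^*X$, and ``$B\mu_X^{-1}(\cdots)$'' is not meaningful since $B\mu_X$ need not be injective --- the least-fixed-point property of $\liftco{\liftco{c}}$ cannot be invoked directly against $\liftco{c} \circ \mu_X$, which is not an element of the DCPO on which $\varphi_{\liftco{c}}$ acts; that is precisely why the chain induction (or an admissible-subset argument) is needed in the first place.
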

The lifting gives rise to a distributive law of monad over comonad.
\begin{theorem}\label{thm:dl}
	Let $\rho \colon \Sigma B^\infty \Rightarrow B \Sigma^*$ be a monotone biGSOS specification, where
	$B$ is $\DCPO$-ordered and preserves weak pullbacks. There exists a distributive law
	$\lambda \colon \Sigma^* B^\infty \Rightarrow B^\infty \Sigma^*$ of the free monad
	$\Sigma^*$ over the cofree comonad $B^\infty$ such that
	the operational model of $\lambda$ is the least supported model of $\rho$.
\end{theorem}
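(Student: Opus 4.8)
The plan is to derive Theorem~\ref{thm:dl} as an essentially formal consequence of Theorems~\ref{thm:hom-pres} and~\ref{thm:mu-hom}, using the correspondence recalled at the end of Section~\ref{sec:bialg} between distributive laws of a monad over a cofree comonad and liftings of that monad to the category of coalgebras for the underlying functor. Concretely, Theorem~\ref{thm:mu-hom} gives a lifting $(\lift{\Sigma^*},\eta,\mu)$ of the free monad $\Sigma^*$ to $\coalg{B}$ (this uses that $B$ preserves weak pullbacks, which is assumed here). Since $B^\infty$ is the cofree comonad for $B$, we have $\emcoalg{B^\infty}\cong\coalg{B}$, and therefore $\lift{\Sigma^*}$ transports to a lifting of $\Sigma^*$ to $\emcoalg{B^\infty}$. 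By the cited one-to-one correspondence (\cite{J75,TP97}), this lifting yields a distributive law $\lambda\colon\Sigma^*B^\infty\Rightarrow B^\infty\Sigma^*$ of the monad $\Sigma^*$ over the comonad $B^\infty$.

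It then remains to identify the operational model of $\lambda$ with the least supported model of $\rho$. As recalled in Section~\ref{sec:bialg}, the operational model of a distributive law induced by a lifting $\lift{\mathcal{T}}$ is obtained by applying that lifting to the unique coalgebra $!\colon\emptyset\rightarrow B\emptyset$; that is, it is the $B$-coalgebra component of $\lift{\Sigma^*}(\emptyset,!) = (\Sigma^*\emptyset, \liftco{!})$. So I must show $\liftco{!}$ equals the least supported model of $\rho$. By definition $\liftco{!}$ is the least fixed point of $\varphi_{!}$, and by construction $\varphi_{!}(f) = [B\mu_\emptyset\circ\rho_{\Sigma^*\emptyset}\circ\Sigma f^\infty,\; B\eta_\emptyset\circ{!}]\circ[\iota_\emptyset,\eta_\emptyset]^{-1}$. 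Since the domain is $\emptyset$, the component $\eta_\emptyset\colon\emptyset\rightarrow\Sigma^*\emptyset$ makes $[\iota_\emptyset,\eta_\emptyset]=\iota_\emptyset$ an isomorphism, and the copairing collapses, so $\varphi_{!}(f) = B\mu_\emptyset\circ\rho_{\Sigma^*\emptyset}\circ\Sigma f^\infty\circ\iota_\emptyset^{-1}$, which is exactly the operator $\varphi$ from Equation~\eqref{eq:phi} whose least fixed point is the least supported model (Corollary~\ref{cor:model}). Hence $\liftco{!}$ and the least supported model coincide.

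The genuinely hard content has already been discharged in Theorems~\ref{thm:hom-pres} and~\ref{thm:mu-hom}; the remaining obstacle in the present proof is mostly bookkeeping, namely checking that the equivalence $\emcoalg{B^\infty}\cong\coalg{B}$ identifies the lifting $\lift{\Sigma^*}$ with the data needed to invoke the distributive-law correspondence, and that under this identification the "apply to $!$" recipe for the operational model matches $\liftco{!}$. One should also confirm that the monad-lifting axioms $U\lift{\eta}=\eta U$ and $U\lift{\mu}=\mu U$ established in Theorem~\ref{thm:mu-hom} are precisely what the correspondence requires, so that the resulting $\lambda$ is a genuine distributive law of monad over comonad (satisfying all four coherence conditions) rather than merely a natural transformation; but these are standard and follow directly from the cited results.
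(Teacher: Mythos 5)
Your proposal is correct and follows essentially the same route as the paper's own proof: invoke Theorem~\ref{thm:mu-hom} to lift the free monad to $\coalg{B}$, use the correspondence between such liftings and distributive laws over the cofree comonad, and identify the operational model by applying the lifting to $! \colon \emptyset \rightarrow B\emptyset$ and observing it coincides with the least supported model. Your explicit check that $\varphi_!$ collapses to the operator $\varphi$ of Equation~\eqref{eq:phi} when the carrier is $\emptyset$ merely spells out what the paper summarises as ``by definition of the lifting.''
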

\begin{proof}
	By Theorem~\ref{thm:mu-hom}, we obtain a lifting of $(\Sigma^*,\eta,\mu)$
	to $\coalg{B}$. As explained in the preliminaries, such a lifting
	corresponds uniquely to a distributive law of the desired type. 
	The operational model of $\lambda$ is obtained
	by applying the lifting to the unique coalgebra $! \colon \emptyset \rightarrow B\emptyset$. 
	But that coincides, by definition of the lifting, with the least supported
	model as defined in Section~\ref{sec:mon-spec}.
\end{proof}

It follows from the general theory of bialgebras that the unique coalgebra morphism 
from the least supported model to the final coalgebra is an algebra homomorphism, i.e.,
behavioural equivalence on the least supported model of a monotone biGSOS specification is a congruence.

\paragraph{Labelled transition systems}
The results above do not apply to labelled transition systems. The problem is that
the cofree comonad for the functor $(\pow -)^A$ does not exist. 
A first attempt would be to restrict to the finitely branching transition systems, i.e., 
coalgebras for the functor $(\powf -)^A$. But this functor is not $\DCPO$-ordered, and indeed, 
contrary to the case of GSOS and coGSOS, 
even with a finite biGSOS specification one can easily generate a least model with infinite branching, so
that a lifting as in the previous section can not exist. 
\begin{example}
	Consider the following specification on (finitely branching) labelled transition systems, involving a unary operator $\sigma$ and a constant $c$:
	$$
	\frac{}{c \xrightarrow{a} \sigma(c)}
	\qquad 
	\frac{}{\sigma(x) \xrightarrow{a} \sigma(\sigma(x))}
	\qquad
	\frac{x \xrightarrow{a} x' \xrightarrow{a} x'' \xrightarrow{a} x'''}{\sigma(x) \xrightarrow{a} x'''}
	$$
	The left rule for $\sigma$ constructs an infinite chain of transitions from $\sigma(x)$ for any $x$,
	so in particular for $\sigma(c)$. The right rule takes the transitive closure of transitions from 
	$\sigma(c)$, so in the least model there are infinitely many transitions from $\sigma(c)$.
\end{example}
The model in the above example has countable branching. One might ask whether it can be adapted
to generate uncountable branching, i.e., that we can 
construct a biGSOS specification for the functor $(\powc -)^A$, such that the model
of this specification would feature uncountable branching. However, as it turns out, this is not the case,
at least if we assume $\Sigma$ to be a polynomial functor (a countable coproduct of finite products,
modelling a signature with countably many operations each of finite arity),
and the set of labels $A$ to be countable. This is shown more generally in the next section.

\section{Liftings for countably accessible functors}\label{sec:cpo} 

In the previous section, we have seen that one of the most important instances
of the framework---the case of labelled transition systems---does not work,
because of size issues: the functors in question either do
not have a cofree comonad, or are not DCPO-ordered. In the current section,
we solve this problem by showing that, if both functors $B, \Sigma$ are reasonably well-behaved,
then it suffices to have a DCPO-ordering of $B$ only on \emph{countable} sets.

More precisely, let $\Setc$ be the full subcategory of countable sets, 
with inclusion $I \colon \Setc \rightarrow \Set$. 
We assume that $(B,\sqsubseteq)$ is an ordered functor on $\Set$,
and that its restriction to countable sets is $\DCPO$-ordered:
$$
\xymatrix{
		& \DCPO \ar[r] & \PreOrd \ar[d]\\
	\Setc \ar[ur]^{\sqsubseteq} \ar[r]_{I} & \Set \ar[ur]^{\sqsubseteq} \ar[r]_{B}  & \Set
}
$$
This is a weaker assumption than in Section~\ref{sec:dl}:
before, every set $BX$ was assumed to be a pointed DCPO, whereas
here, they only need to be pointed DCPOs when $X$ is countable (and just a preorder otherwise).

\begin{example}\label{ex:restr-ord}
The functor $(\powc -)^A$ coincides
with the $\DCPO$-ordered functor $(\pow -)^A$ when restricted to countable sets, hence
it satisfies the above assumption. Notice that $(\powc-)^A$ is not $\DCPO$-ordered.
The functor $(\powf -)^A$ does not satisfy the above assumption.

The functor $(\M -)^A$, for the complete monoid $\real^+ \cup \{\infty\}$ (Example~\ref{ex:comonad:wts}), 
is ordered as a complete lattice~\cite{RotB16}, so also $\DCPO$-ordered. Similar to the above, the functor
$(\M_c -)^A$ is $\DCPO$-ordered when restricted to countable sets, i.e., satisfies the above assumption. 
\end{example}

We define $\coalgc{B}$ to be the full subcategory of $B$-coalgebras
whose carrier is a countable set, with inclusion $\lift{I} \colon \coalgc{B} \rightarrow \coalg{B}$. 
The associated forgetful functor is denoted by $U \colon \coalgc{B} \rightarrow \Setc$. 

The pointed DCPO structure on each $BX$, for $X$ countable, suffices to carry
out the fixed point constructions from the previous sections for coalgebras
over countable sets, if we assume that $\Sigma^*$ preserves countable sets. 
Notice, moreover, that the (partial) order 
on the functor $B$ is still necessary to define the simulation order on $B^\infty X$,
and hence speak about monotonicity of biGSOS specifications. The proof
of the following theorem is essentially the same as in the previous section. 

\begin{theorem}\label{thm:lifting-cnt}
Suppose $\Sigma^*$ preserves countable sets, and $B$ is an ordered
functor which preserves weak pullbacks and whose restriction to $\Setc$ is $\DCPO$-ordered. 
Let 
$(\Sigma^*_c,\eta^c,\mu^c)$ be the restriction of $(\Sigma^*,\eta,\mu)$ to $\Setc$.
Any monotone biGSOS specification $\rho \colon \Sigma B^\infty \Rightarrow B\Sigma^*$
gives rise to a lifting $(\lift{\Sigma^*}_c, \lift{\eta}^c, \lift{\mu}^c)$ of the monad
$(\Sigma^*_c,\eta^c,\mu^c)$
to $\coalgc{B}$. 
\end{theorem}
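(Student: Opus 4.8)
The plan is to mimic, step by step, the development of Section~\ref{sec:dl}, checking at each point that the only place the $\DCPO$-structure of $B$ was used is in fixed-point constructions whose carriers are now countable. First I would fix a $B$-coalgebra $(X,c)$ with $X$ countable. Since $\Sigma^*$ preserves countable sets, $\Sigma^*X$ is countable, hence $B\Sigma^*X$ is a pointed DCPO by assumption, and therefore $\coalg{B}_{\Sigma^*X}$, ordered pointwise, is a pointed DCPO. The operator $\varphi_c$ from Section~\ref{sec:dl} is still well-defined on this DCPO (note the similarity order on $B^\infty\Sigma^*X$, needed to make sense of $(\cdot)^\infty$ and of monotonicity of $\rho$, only requires the \emph{preorder} part of the ordered-functor structure, which is available on all sets). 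Exactly as in Lemma~\ref{lm:phi-mon} and the subsequent lemma, $\varphi_c$ is monotone, so its least fixed point $\liftco c \colon \Sigma^*X \to B\Sigma^*X$ exists; this defines the object map $\lift{\Sigma^*}_c(X,c) = (\Sigma^*X, \liftco c)$, which lands in $\coalgc{B}$ because $\Sigma^*X$ is countable.

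Next I would redo Theorem~\ref{thm:hom-pres} and Theorem~\ref{thm:mu-hom} verbatim, restricted to countable carriers. For homomorphism preservation: given a coalgebra homomorphism $h \colon (X,c) \to (Y,d)$ in $\coalgc{B}$ (so both $X,Y$, and hence $\Sigma^*X,\Sigma^*Y$, are countable), the transfinite-induction argument of Theorem~\ref{thm:hom-pres} shows $\Sigma^*h$ is a homomorphism from $\liftco c$ to $\liftco d$; the limit case uses continuity of $B\Sigma^*h$ on the DCPO $B\Sigma^*Y$, which is a pointed DCPO precisely because $Y$ is countable, and the successor case is the same diagram chase using naturality of $[\iota,\eta]$, $\rho$, $\mu$, $\eta$. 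This yields the functor $\lift{\Sigma^*}_c \colon \coalgc{B} \to \coalgc{B}$ with $U\lift{\Sigma^*}_c = \Sigma^*_c U$. Then, as in Theorem~\ref{thm:mu-hom} (this is where $B$ preserving weak pullbacks enters, via the similarity lemmas of Section~\ref{sec:similarity}, in particular Lemma~\ref{lm:ineq-comonad}), one checks that $\eta_X$ is a homomorphism from $(X,c)$ to $\lift{\Sigma^*}_c(X,c)$ and that $\mu_X$ is a homomorphism from $\lift{\Sigma^*}_c\lift{\Sigma^*}_c(X,c)$ to $\lift{\Sigma^*}_c(X,c)$ — again using that all carriers involved ($X$, $\Sigma^*X$, $\Sigma^*\Sigma^*X$) stay countable, so every fixed point invoked in that proof lives in a genuine pointed DCPO. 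Setting $\lift{\eta}^c = \eta^c$, $\lift{\mu}^c = \mu^c$ componentwise then gives a monad $(\lift{\Sigma^*}_c,\lift{\eta}^c,\lift{\mu}^c)$ on $\coalgc{B}$ lifting $(\Sigma^*_c,\eta^c,\mu^c)$, since the monad laws hold after applying the faithful forgetful $U$.

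The main obstacle is organisational rather than conceptual: one must be careful that the closure property ``$\Sigma^*$ preserves countable sets'' is genuinely enough to keep \emph{every} carrier appearing in the proofs of Theorems~\ref{thm:hom-pres} and~\ref{thm:mu-hom} inside $\Setc$ — in particular $\Sigma^*\Sigma^*X$ in the multiplication square, and the carriers of the directed families arising in the initial-chain construction (these are all of the form $B\Sigma^*(\text{countable})$, so they are fine). A secondary subtlety is that the ambient ordered functor $(B,\sqsubseteq)$ on \emph{all} of $\Set$ is still needed to define $\lesssim_{B^\infty X}$ for arbitrary $X$ and thereby state monotonicity of $\rho$; only the \emph{directed-completeness and bottom} are localised to countable sets. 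Once these bookkeeping points are observed, no new ideas are required beyond those already deployed in Section~\ref{sec:dl}, and the proof is, as claimed, essentially the same.
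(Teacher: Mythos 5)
Your proposal is correct and matches the paper's intent exactly: the paper itself states that the proof is ``essentially the same as in the previous section,'' relying on precisely the observations you make — that $\Sigma^*$ preserving countable sets keeps all carriers ($\Sigma^*X$, $\Sigma^*\Sigma^*X$) in $\Setc$ so the fixed-point constructions live in genuine pointed DCPOs, while the similarity order needed for monotonicity only uses the preorder structure available on all sets. No further comment is needed.
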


In the remainder of this section, we will show that, under certain
assumptions on $B$ and $\Sigma^*$, the above lifting extends
to a lifting of the monad $\Sigma^*$ from
$\Set$ to $\coalg{B}$, and hence a distributive
law of the monad $\Sigma^*$ over the cofree comonad $B^\infty$. 
It relies on the fact that, under certain conditions,
we can present every coalgebra as a (filtered) colimit of coalgebras over countable sets.

We use the theory of locally (countably, i.e., $\omega_1$-) presentable categories
and (countably) accessible categories. Because of space limits we can
not properly recall that theory in detail here (see~\cite{AR94}); 
we only recall a concrete characterisation of when a functor on $\Set$ is countably accessible,
since that will be assumed both for $B$ and $\Sigma^*$ later on.
On $\Set$, a functor $B \colon \Set \rightarrow \Set$ is 
\emph{countably accessible} if for every set $X$ and element
$x \in BX$, there is an injective function $i \colon Y \rightarrow X$
from a finite set $Y$ and an element $y \in BY$ such that $Bi(y) = x$.
Intuitively, such functors are determined by how they operate on countable sets.
\begin{example}\label{ex:count-acc}
	Any finitary functor is countably accessible. Further, 
	the functors $(\powc-)^A$ and $(\M_c -)^A$ (c.f.\ Example~\ref{ex:restr-ord}) 
	are countably accessible if $A$ is countable. 
\end{example}
A functor is called \emph{strongly countably accessible} if
it is countably accessible and additionally preserves countable sets, i.e.,
it restricts to a functor $\Setc \rightarrow \Setc$. We will assume
this for our ``syntax'' functor $\Sigma^*$. If $\Sigma$ correponds
to a signature with countably many operations each of finite arity (so is a countable coproduct of finite products)
then $\Sigma^*$ is strongly countably accessible. 

The central idea of obtaining a lifting to $\coalg{B}$ from a lifting to $\coalgc{B}$
is to \emph{extend} the monad on $\coalgc{B}$ along the inclusion $\lift{I} \colon \coalgc{B} \rightarrow \coalg{B}$.
Concretely, a functor $T \colon \Set \rightarrow \Set$ extends $T_c \colon \Setc \rightarrow \Setc$
if there is a natural isomorphism $\alpha \colon IT_c \Rightarrow TI$. A monad
$(T,\eta,\mu)$ on $\Set$ extends a monad $(T_c, \eta_c, \mu_c)$ on $\Setc$
if $T_c$ extends $T$ with some isomorphism $\alpha$ such that 
$\alpha \circ I\eta_c = \eta I$ 
and $\alpha \circ I \mu_c= \mu I \circ T\alpha \circ \alpha T_c$.
This notion of extension is generalised naturally to arbitrary locally countably
presentable categories. Monads on the category of countably presentable objects can always be extended.
 \begin{lemma}\label{lm:ext-monad-basic}	
 	Let $\C$ be a locally countably presentable category, 
 	with $I \colon \C_\cnt \rightarrow \C$ the subcategory of countably
 	presentable objects.
	Any monad $(T_\cnt, \eta^\cnt,\mu^\cnt)$ on $\C_\cnt$
	extends uniquely to a monad $(T,\eta,\mu)$ on $\C$, along $I \colon \C_\cnt \rightarrow \C$.
\end{lemma}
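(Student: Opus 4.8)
The plan is to exploit the universal property of the free cocompletion under $\omega_1$-filtered colimits. Since $\C$ is locally countably presentable, the inclusion $I \colon \C_\cnt \rightarrow \C$ exhibits $\C$ as $\mathsf{Ind}_{\omega_1}(\C_\cnt)$: for any category $\D$ with $\omega_1$-filtered colimits, precomposition with $I$ is an equivalence between the category of $\omega_1$-accessible functors $\C \rightarrow \D$ (i.e.\ those preserving $\omega_1$-filtered colimits) and the category of \emph{all} functors $\C_\cnt \rightarrow \D$, with pseudo-inverse given by left Kan extension along $I$. Instantiating this at $\D = \C$, I would define $T := \Lan{I}{(IT_\cnt)}$. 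As $I$ is fully faithful the Kan extension restricts back along $I$, which yields the natural isomorphism $\alpha \colon IT_\cnt \Rightarrow TI$; by construction $T$ is $\omega_1$-accessible, and since $T(IX) \cong I(T_\cnt X)$ with $T_\cnt X \in \C_\cnt$, the functor $T$ preserves countably presentable objects.

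Next I would transport the unit and multiplication along the same equivalence. The point is that $\omega_1$-accessible endofunctors of $\C$ which preserve $\C_\cnt$ are closed under composition, so $\Id_\C$, $T$, $TT$ and $TTT$ are all $\omega_1$-accessible and their restrictions along $I$ are, via $\alpha$ and its whiskerings, canonically isomorphic to $I$, $IT_\cnt$, $IT_\cnt T_\cnt$ and $IT_\cnt T_\cnt T_\cnt$. The equivalence of functor categories is fully faithful, hence natural transformations between $\omega_1$-accessible endofunctors of $\C$ correspond bijectively to natural transformations between their restrictions; so $I\eta^\cnt \colon I \Rightarrow IT_\cnt$ and $I\mu^\cnt \colon IT_\cnt T_\cnt \Rightarrow IT_\cnt$ correspond to unique $\eta \colon \Id_\C \Rightarrow T$ and $\mu \colon TT \Rightarrow T$. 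Tracing through the comparison isomorphisms then shows that these are exactly the $\eta, \mu$ characterised by the compatibility equations $\alpha \circ I\eta^\cnt = \eta I$ and $\alpha \circ I\mu^\cnt = \mu I \circ T\alpha \circ \alpha T_\cnt$ of the definition of a monad extension.

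It remains to check the monad laws and uniqueness. The associativity and unit laws for $(T,\eta,\mu)$ are equalities of natural transformations between $\omega_1$-accessible endofunctors of $\C$ (with domains $TTT$ and $T$), so by faithfulness of restriction along $I$ it suffices to verify them after precomposing with $I$; there, using the comparison isomorphisms, they reduce to the monad laws for $(T_\cnt, \eta^\cnt, \mu^\cnt)$ postcomposed with the faithful functor $I$, and thus hold. For uniqueness, any extension agrees with $T_\cnt$ (and $\eta^\cnt$, $\mu^\cnt$) on the dense subcategory $\C_\cnt$ via the structure isomorphism; being an $\omega_1$-accessible endofunctor of $\C$, it is then determined on all of $\C$ by the equivalence, so it is isomorphic to $(T,\eta,\mu)$ compatibly with all the data. (Strictly, this argument pins down the unique \emph{$\omega_1$-accessible} extension; this is the one the lemma refers to and the only one subsequently used.)

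I expect the main obstacle to be the coherence bookkeeping around $\alpha$: one must verify that the assignment $T_\cnt \mapsto \Lan{I}{(IT_\cnt)}$ is pseudofunctorial and compatible with composition of endofunctors, so that the iterated comparison maps ($T\alpha \circ \alpha T_\cnt$ on $IT_\cnt T_\cnt$, its triple analogue on $IT_\cnt T_\cnt T_\cnt$, and so on) are mutually coherent and agree with those appearing in the definition of a monad extension, whence the monad laws really do transport. The one substantive ingredient beyond this formal manipulation is the closure of $\omega_1$-accessible, presentability-preserving endofunctors under composition: without $T$ preserving countably presentable objects, $TT$ need not be $\omega_1$-accessible and the transport of $\mu$ would break down. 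Everything else is a direct appeal to the standard theory of locally countably presentable categories, which the paper has elected not to recall in detail.
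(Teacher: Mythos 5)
Your proposal is correct and follows essentially the same route as the paper: exhibit $\C$ as the free completion of $\C_\cnt$ under $\omega_1$-filtered colimits, define $T$ as the left Kan extension of $IT_\cnt$ along the (fully faithful, dense) inclusion $I$, and transport $\eta^\cnt,\mu^\cnt$ and the monad laws through the resulting equivalence between $\omega_1$-accessible endofunctors of $\C$ and functors $\C_\cnt \rightarrow \C$, with uniqueness read (as the paper's later use for $\Sigma^*$ requires) among countably accessible extensions. One minor correction: the closure property you single out at the end is not actually needed, since composites of functors preserving $\omega_1$-filtered colimits automatically preserve them, so $TT$ and $TTT$ are $\omega_1$-accessible regardless of whether $T$ preserves countably presentable objects (which it does anyway, as you observe, via $\alpha$).
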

Since $B$ is countably accessible,
$\coalg{B}$ is locally countably presentable and $\coalgc{B}$ is
the associated category of countably presentable objects~\cite{AdamekP04}.
This means every $B$-coalgebra can be presented
as a filtered colimit of $B$-coalgebras with countable carriers.
The above lemma applies, so we can extend the monad on $\coalgc{B}$
of Theorem~\ref{thm:lifting-cnt} to a monad on $\coalg{B}$, 
resulting in Theorem~\ref{thm:dl2} below. 
The latter relies on Theorem~\ref{thm:cube}, which ensures that, doing so, we will get a lifting 
of the monad on $\Set$ that we started with. 

In the remainder of this section, we will consider a slightly 
relaxed version of functor liftings, up to isomorphism,
similar to extensions defined before. This is harmless---those
still correspond to distributive laws---but since
the monad on $\coalg{B}$ is constructed only up to isomorphism,
it is more natural to work with in this setting. 
We say $(\lift{T}, \lift{\eta}, \lift{\mu})$ lifts
$(T,\eta,\mu)$ (up to isomorphism) if there is a natural isomorphism $\alpha \colon U\lift{T} \Rightarrow TU$
such that $\alpha \circ U\lift{\eta} = \eta U$ and $\alpha \circ U \lift{\mu}= \mu U \circ T\alpha \circ \alpha \lift{T}$.

 \begin{theorem}\label{thm:cube}
	Let $B \colon \Set \rightarrow \Set$ be countably accessible. 
	Suppose $(T_\cnt,\eta^\cnt,\mu^\cnt)$ is a monad on $\Setc$, which
	lifts to a monad 
	$(\lift{T}_\cnt, \lift{\eta}^\cnt, \lift{\mu}^\cnt)$
	on $\coalgc{B}$.
	Then 
	\begin{enumerate}
		\item $(T_\cnt,\eta^\cnt,\mu^\cnt)$ extends to $(T,\eta,\mu)$ along $I \colon \Set_\cnt \rightarrow \Set$,
		\item $(\lift{T}_\cnt, \lift{\eta}^\cnt, \lift{\mu}^\cnt)$ extends to $(\lift{T}, \lift{\eta}, \lift{\mu})$
		along $\lift{I} \colon \coalgc{B} \rightarrow \coalg{B}$,
		\item $(\lift{T}, \lift{\eta}, \lift{\mu})$ is a lifting (up to isomorphism) of $(T,\eta,\mu)$.
	\end{enumerate}
\end{theorem}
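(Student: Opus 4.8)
The plan is to prove the three claims in sequence, each feeding into the next, with the bulk of the real work being bookkeeping about colimits rather than any single hard insight.

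For item~(1), I would simply invoke Lemma~\ref{lm:ext-monad-basic} with $\C = \Set$: since $\Set$ is locally countably presentable and $\Set_\cnt$ is its subcategory of countably presentable objects, the monad $(T_\cnt,\eta^\cnt,\mu^\cnt)$ extends uniquely to a monad $(T,\eta,\mu)$ on $\Set$ along $I$. Concretely $T$ is obtained by left Kan extension of $IT_\cnt$ along $I$, which here amounts to taking filtered colimits of the values of $T_\cnt$ over countable subsets; the natural isomorphism $\alpha\colon IT_\cnt \Rightarrow TI$ and the compatibility with $\eta_\cnt,\mu_\cnt$ come straight out of that lemma.

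For item~(2), I would apply Lemma~\ref{lm:ext-monad-basic} again, this time with $\C = \coalg{B}$. The hypothesis that $B$ is countably accessible gives, by~\cite{AdamekP04}, that $\coalg{B}$ is locally countably presentable with $\coalgc{B}$ as its subcategory of countably presentable objects; so the monad $(\lift{T}_\cnt,\lift{\eta}^\cnt,\lift{\mu}^\cnt)$ on $\coalgc{B}$ extends uniquely along $\lift{I}$ to a monad $(\lift{T},\lift{\eta},\lift{\mu})$ on $\coalg{B}$, with an isomorphism $\beta\colon \lift{I}\lift{T}_\cnt \Rightarrow \lift{T}\lift{I}$ satisfying the analogous equations.

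Item~(3) is the heart of the argument and the step I expect to be the main obstacle: I must show $(\lift{T},\lift{\eta},\lift{\mu})$ is a lifting (up to isomorphism) of $(T,\eta,\mu)$, i.e.\ produce a natural isomorphism $\gamma\colon U\lift{T} \Rightarrow TU$ compatible with units and multiplications. The idea is that the square of forgetful/inclusion functors relating $\coalgc{B}, \coalg{B}, \Set_\cnt, \Set$ commutes (on the nose, $U\lift{I} = I U_\cnt$, where $U_\cnt\colon\coalgc{B}\to\Set_\cnt$), and that $U$ preserves filtered colimits (since the forgetful functor from $\coalg{B}$ creates them). Given a $B$-coalgebra $(X,c)$, write it as a filtered colimit of coalgebras $(X_i,c_i)$ with countable carriers; then $\lift{T}(X,c) = \colim_i \lift{I}\lift{T}_\cnt(X_i,c_i)$ by the construction in~(2), so $U\lift{T}(X,c) \cong \colim_i U\lift{I}\lift{T}_\cnt(X_i,c_i) = \colim_i I U_\cnt \lift{T}_\cnt(X_i,c_i)$. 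Using that $\lift{T}_\cnt$ lifts $T_\cnt$ (its defining isomorphism $U_\cnt\lift{T}_\cnt \cong T_\cnt U_\cnt$) and that $T$ was built in~(1) as the filtered colimit of $T_\cnt$ along countable subobjects, this in turn is isomorphic to $\colim_i T_\cnt U_\cnt(X_i,c_i) = \colim_i T_\cnt X_i \cong T(\colim_i X_i) = TX = TU(X,c)$. Assembling these isomorphisms naturally in $(X,c)$ gives $\gamma$. The delicate points I would need to check carefully are: that the colimit presentation of $(X,c)$ used in~(2) can be taken compatibly with the colimit presentation of $X$ used in~(1) (both are the canonical filtered diagrams of countably presentable subobjects, and $U$ sends one to the other), and that $\gamma$ respects $\lift{\eta}$ and $\lift{\mu}$ — this follows because each of $\beta$, the lifting isomorphism for $\lift{T}_\cnt$, and $\alpha$ is compatible with the respective units and multiplications by the hypotheses of~(1) and~(2), and these compatibilities are preserved under the (filtered-colimit-continuous) functor $U$. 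The uniqueness clauses in Lemma~\ref{lm:ext-monad-basic} then guarantee there is no ambiguity in how the pieces fit together.
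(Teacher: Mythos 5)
Your proposal follows essentially the intended route: items (1) and (2) are precisely the two invocations of Lemma~\ref{lm:ext-monad-basic} (for $\Set$ and, via countable accessibility of $B$ and~\cite{AdamekP04}, for $\coalg{B}$ with $\coalgc{B}$ as its countably presentable objects), and item (3) is obtained, as intended, by comparing $U\lift{T}$ with $TU$ on the dense subcategory of countable-carrier coalgebras and using that $U$ creates colimits while the extended functors preserve the canonical colimits of countably presentable (sub)objects, with the monad compatibilities then forced by uniqueness. The only nit is terminological: the colimits you should invoke are countably filtered (\emph{i.e.}\ $\omega_1$-filtered) ones, which is what the extensions and countably accessible functors are guaranteed to preserve, rather than arbitrary filtered colimits.
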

By instantiating the above theorem with the lifting of Theorem~\ref{thm:lifting-cnt}, the third point 
gives us the desired lifting to $\coalg{B}$. In particular $T_\cnt$ is instantiated
to the restriction $\Sigma^*_\cnt$ of $\Sigma^*$, which means that the extension
in the first point is just $\Sigma^*$ itself. 
 \begin{theorem}\label{thm:dl2}
	Let $\rho \colon \Sigma B^\infty \Rightarrow B \Sigma^*$ be a monotone biGSOS specification, where
	$B$ is an ordered functor whose restriction to countable sets is $\DCPO$-ordered, $B$ is countably accessible,
	$B$ preserves weak pullbacks,
	and $\Sigma^*$ is strongly countably accessible. There exists a distributive law
	$\lambda \colon \Sigma^* B^\infty \Rightarrow B^\infty \Sigma^*$ of the free monad
	$\Sigma^*$ over the cofree comonad $B^\infty$ such that
	the operational model of $\lambda$ is the least supported model of $\rho$.
\end{theorem}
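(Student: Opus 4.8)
The plan is to chain together Theorems~\ref{thm:lifting-cnt} and~\ref{thm:cube} (together with Lemma~\ref{lm:ext-monad-basic}) and then invoke the correspondence between monad liftings and distributive laws recalled in Section~\ref{sec:bialg}. First I would verify the hypotheses of Theorem~\ref{thm:lifting-cnt}: being strongly countably accessible, $\Sigma^*$ preserves countable sets; $B$ preserves weak pullbacks and its restriction to $\Setc$ is $\DCPO$-ordered by assumption; and $\rho$ is monotone. Hence Theorem~\ref{thm:lifting-cnt} produces a lifting $(\lift{\Sigma^*}_c, \lift{\eta}^c, \lift{\mu}^c)$ of the restricted monad $(\Sigma^*_c, \eta^c, \mu^c)$ to $\coalgc{B}$.

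Next, since $B$ is countably accessible, $\coalg{B}$ is locally countably presentable with $\coalgc{B}$ its full subcategory of countably presentable objects~\cite{AdamekP04}, so Theorem~\ref{thm:cube} applies with $T_c = \Sigma^*_c$. Its three conclusions give an extension of $\Sigma^*_c$ to a monad $(T,\eta,\mu)$ on $\Set$, an extension of $\lift{\Sigma^*}_c$ to a monad $(\lift{\Sigma^*}, \lift{\eta}, \lift{\mu})$ on $\coalg{B}$, and the fact that the latter is a lifting up to isomorphism of the former. By uniqueness of extensions (Lemma~\ref{lm:ext-monad-basic}), and since $\Sigma^*$ (preserving countable sets) is itself an extension of $\Sigma^*_c$, the monad $(T,\eta,\mu)$ is $(\Sigma^*, \eta, \mu)$ itself. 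We thus obtain a lifting up to isomorphism of the free monad $\Sigma^*$ to $\coalg{B}$. Because $B^\infty$ is the cofree comonad for $B$, we have $\emcoalg{B^\infty} \cong \coalg{B}$, and liftings up to isomorphism of $\Sigma^*$ to $\emcoalg{B^\infty}$ correspond to distributive laws $\lambda \colon \Sigma^* B^\infty \Rightarrow B^\infty \Sigma^*$ of $\Sigma^*$ over $B^\infty$; this produces the required $\lambda$.

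It remains to identify the operational model of $\lambda$. Since $\Sigma^*$ preserves countable sets, $\Sigma^*\emptyset$ is countable, so $B\Sigma^*\emptyset$ is a pointed DCPO and the least supported model of $\rho$ exists by Corollary~\ref{cor:model}. The operational model of $\lambda$ is obtained by applying the lifting $\lift{\Sigma^*}$ to the unique coalgebra $! \colon \emptyset \rightarrow B\emptyset$; as $\emptyset$ is countable, the natural isomorphism witnessing that $\lift{\Sigma^*}$ extends $\lift{\Sigma^*}_c$ identifies this with $\lift{\Sigma^*}_c(\emptyset, !)$, whose underlying coalgebra, by the least-fixed-point construction behind Theorem~\ref{thm:lifting-cnt} (that of Section~\ref{sec:mon-spec}), is precisely the least supported model of $\rho$.

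I expect the only genuine subtlety to be bookkeeping: pinning down the abstract extension $T$ as $\Sigma^*$ via uniqueness, and tracking the coherent isomorphisms carefully enough to conclude that the operational model of $\lambda$ genuinely coincides with the least supported model from Section~\ref{sec:mon-spec}, rather than merely being isomorphic to it. All the conceptual work is already packaged in the cited results.
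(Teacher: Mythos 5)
Your proposal matches the paper's own argument: instantiate Theorem~\ref{thm:lifting-cnt} to get the lifting on $\coalgc{B}$, feed it into Theorem~\ref{thm:cube} (using that $\coalgc{B}$ is the subcategory of countably presentable objects of the locally countably presentable category $\coalg{B}$, and that by uniqueness of extensions the extended monad is $\Sigma^*$ itself), and convert the resulting lifting up to isomorphism into a distributive law, identifying the operational model with the least supported model via the coalgebra on $\emptyset$. This is essentially the same route the paper takes, including the remark that liftings up to isomorphism still correspond to distributive laws.
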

As explained in Example~\ref{ex:count-acc} and Example~\ref{ex:restr-ord},
if $B$ is either $(\pow_c -)^A$ or $(\M_c -)^A$ (weighted in the non-negative real numbers) with $A$ countable,
then it satisfies the above hypotheses (that $\M_c$ preserves weak pullbacks follows essentially from~\cite{GummS01}). 
So the above theorem applies to labelled transition systems and weighted transition systems (of the above type) over a countable
set of labels, as long as the syntax is composed of countably many operations each with finite arity. 
Hence, behavioural equivalence on the operational model of any biGSOS specification for such systems is a congruence.

\section{Future work}

In this paper we provided a bialgebraic foundation of positive specification formats over ordered functors, 
involving rules that feature lookahead in the premises as well as complex terms in conclusions. From a practical point of view,
it would be interesting to find more concrete rules formats corresponding to the abstract format of the present paper. In particular, concrete GSOS formats for weighted transition systems exist~\cite{Klin09}; they 
could be a good starting point.

It is currently unclear to us whether the assumption of weak pullback preservation in the main results is necessary.
This assumption is used in our proof of Lemma~\ref{lm:ineq-comonad}, which in turn is used in the proof
that the free monad lifts to the category of coalgebras (Theorem~\ref{thm:mu-hom}).
Finally, we would like to study \emph{continuous} specifications, as opposed to specifications that 
are only monotone, as in the current paper. Continuous specifications should be better
behaved than monotone ones. However, it is currently not yet clear how to characterize
continuity of a specification both at the concrete, syntactic level. 

\bibliographystyle{eptcs}

\bibliography{monotone}

\end{document}